\theoremstyle{plain}
\newtheorem{theorem}{Theorem}
\newtheorem{lemma}[theorem]{Lemma}
\newtheorem{definition}[theorem]{Definition}
\newtheorem{corollary}[theorem]{Corollary}
\theoremstyle{definition}
\newtheorem{remark}{Remark}
\begin{document}
\title{Scalar MSCR Codes via the Product Matrix Construction}
\author{\IEEEauthorblockN{Yaqian Zhang, ~Zhifang Zhang}\\
\IEEEauthorblockA{\fontsize{9.8}{12}\selectfont KLMM, Academy of Mathematics and Systems Science, Chinese Academy of Sciences, Beijing 100190, China\\
 School of Mathematical Sciences, University of Chinese Academy of Sciences, Beijing 100049, China\\
Emails: zhangyaqian15@mails.ucas.ac.cn, ~zfz@amss.ac.cn}

}

\maketitle
\thispagestyle{empty}
\begin{abstract}
An $(n,k,d)$ cooperative regenerating code provides the optimal-bandwidth repair for any $t~(t\!>\!1)$ node failures in a cooperative way. In particular, an MSCR (minimum storage cooperative regenerating) code retains the same storage overhead as an $(n,k)$ MDS code. Suppose each node stores $\alpha$ symbols which indicates the sub-packetization level of the code. A scalar MSCR code attains the minimum sub-packetization, i.e., $\alpha=d-k+t$. By now, all existing constructions of scalar MSCR codes restrict to very special parameters, eg. $d=k$ or $k=2$, etc. In a recent work, Ye and Barg construct MSCR codes for all $n,k,d,t$, however, their construction needs $\alpha\approx{\rm exp}(n^t)$ which is almost infeasible in practice. In this paper, we give an explicit construction of scalar  MSCR codes for all $d\geq \max\{2k-1-t,k\}$, which covers all possible parameters except the case of $k\leq d\leq 2k-2-t$ when $k<2k-1-t$. Moreover, as a complementary result, for $k<d<2k-2-t$ we prove the nonexistence of linear scalar MSCR codes that have invariant repair spaces. Our construction and most of the previous scalar MSCR codes all have invariant repair spaces and this property is appealing in practice because of convenient repair. As a result, this work presents an almost full description of linear scalar MSCR codes.

\end{abstract}

\section{Introduction}\label{sec1}
A central issue in large-scale distributed storage systems (DSS) is the efficient repair of node failures. Suppose a data file is stored across $n$ nodes such that a data collector can retrieve the original file by reading the contents of any $k$ nodes. When some node fails, a self-sustaining storage system tends to regenerate the failed node by downloading data from some surviving nodes (i.e. {helper nodes}), which is called the {node repair} problem. An important metric for the node repair efficiency is the {repair bandwidth}, namely, the total amount of data downloaded during the repair process. In a celebrated work \cite{Dimakis2011}, Dimakis et al. proposed the {regenerating codes} which achieve the tradeoff (i.e. the cut-set bound) between the repair bandwidth and the amount of data stored per node. That is, fixing the repair bandwidth,
the storage  cannot be further reduced, and vice versa. In particular, regenerating codes with the minimum storage and with the minimum repair bandwidth are respectively called MSR codes and MBR codes. Constructing MSR codes and MBR codes for general parameters has been extensively studied in a series of works \cite{Ramchandran2011,Kumar2011,Vardy2016,Sasidharan2015,Ye2016,Ye2016sub-}.

Regenerating codes typically deal with single node failures, however, the scenarios of multiple node failures are quite common in DSS.  For example, in Total Recall
\cite{TotalRecall} a repair process is triggered only after the total number of failed nodes has reached a predefined threshold. There are two typical models for repairing multiple node failures.  One is the centralized repair model where a special node called data center is assumed to complete all repairs. That is, suppose $t$ nodes fail, then the data center downloads data from helper nodes and is responsible for generating all the $t$ new nodes. The other model is the cooperative repair where $t$ new nodes are generated in a distributed and cooperative way. It was proved in \cite{Ye2018} that an MDS code achieving the optimal bandwidth with cooperative repair must also have optimal bandwidth with centralized repair. Therefore, cooperative repair can be viewed as a harder problem than the centralized repair. Moreover, due to the distributed pattern, the cooperative repair fits DSS better than the centralized repair. Regenerating codes with centralized repair are studied in \cite{Cadambe2013,Ye2016,Tamo2016}. In this paper, we focus on regenerating codes with cooperative repair.

The idea of cooperative repair was proposed by Hu et al. in \cite{Hu2010}. Specifically, suppose $t$ new nodes (i.e. newcomers) are to be generated as replacements of $t$ failed nodes respectively. Then the regenerating process is carried out in two phases. Firstly, each newcomer connects to $d$ surviving nodes (i.e. helper nodes) and downloads $\beta_1$ symbols from each. Note that different newcomers
may choose different $d$ helpler nodes; Secondly, each newcomer downloads $\beta_{2}$ symbols from each of the other $t-1$ newcomers.  Therefore the repair bandwidth for repairing one failed node is $\gamma=d\beta_{1}+(t-1)\beta_{2}$. In \cite{Hu2010,Shum2011} a cut-set bound was derived for regenerating codes with cooperative repair, i.e.,
\begin{equation*}
B\leq \sum_{i=1}^{s}l_{i}\min\{\alpha,(d-\sum_{h=1}^{i-1}l_{h})\beta_{1}+(t-l_{i})\beta_{2}\},
\end{equation*}
where $B$ is the size of the original data file, $\alpha$ is the size of data stored in each node,  and $l_1,...,l_s,s$ are the integers satisfying $l_{1}+\cdots+l_{s}=k$ and $1\leq l_{1},...,l_{s}\leq t$. The cut-set bound, along with other necessary conditions on the parameters, define a $\alpha$-$\gamma$ tradeoff curve and codes with parameters lying on this curve are called cooperative regenerating codes. In particular, the two extreme points on the tradeoff curve respectively correspond to MBCR (\emph{minimum bandwidth cooperative regenerating}) and MSCR (\emph{minimum storage cooperative regenerating}) codes.
Parameters of the two types of codes are given below:

For MBCR codes,
\begin{equation*}
\alpha=\gamma=(2d+t-1)\beta_2, \ \ \beta_{1}=2\beta_{2}, \ \ \beta_{2}=\frac{B}{k(2d-k+t)}.
\end{equation*}

For MSCR codes,
\begin{equation}\label{mscr}
\alpha=\frac{B}{k}=(d-k+t)\beta_2, \ \ \beta_{1}=\beta_{2}=\frac{B}{k(d-k+t)}.
\end{equation}
It can be seen that $B, \alpha,\beta_1$ are all integral multiples of $\beta_2$. When $\beta_2=1$, the corresponding code is called a {\it scalar} code. Moreover, $\alpha$ is called the sub-packetization which indicates the sub-block size in the encoding phase. Since the sub-packetization also decides the smallest file size and the number of operations needed in encoding and repairing, regenerating codes with small sub-packetization are preferred. Obviously, a scalar regenerating code has the smallest sub-packetization, however, for some parameters scalar codes cannot exist.

About the constructions of MBCR and MSCR codes, Shum et al.\cite{Shum2011} first explicitly constructed scalar MSCR codes for $d=k$. Then Shum and Hu \cite{Shum&Hu:MBCR} also built MBCR codes for the case of $d=k$ and $n=d+t$. Later, Wang and Zhang \cite{wang2013} gave an explicit construction of scalar MBCR codes for all $n,k,d,t$. However, constructing MSCR codes seems to be a much harder problem. In \cite{Scouarnec2012} the author built scalar MSCR codes for the case $k=2$ and $d=n-t$. Then Chen and Shum\cite{Chen2013} designed a scalar $(n=2k,k,d=2k-2,t=2)$ MSCR code. After that, they generalized the construction and obtained $(n=2k,k,d=n-t,2\leq t\leq n-k)$ scalar MSCR codes but the failed nodes must be systematic nodes \cite{Shum2016}. An obvious drawback of all existing constructions of MSCR codes is that they all restrict to very limited parameters. Until recently, Ye and Barg\cite{Ye2018} gave an explicit construction of MSCR codes for all parameters, i.e., $2\leq t\leq n-k, ~d\geq k+1$. However, the sub-packetization of their codes is extraordinarily large, i.e. {\small $\alpha\!=\!\big((d-k+t)(d-k)^{t-1}\big)^{\binom{n}{t}}\!\approx\!{\rm exp}(n^t)$}. Note that scalar MSCR codes have sub-packetization $\alpha=d-k+t$.

In this paper, we present an explicit construction of scalar MSCR codes for all $d\geq \max\{2k-1-t,k\}$, which almost covers all parameters except for an augment of $d$ when $k>t+1$. Besides, the authors of \cite{Shah2009} have proved nonexistence of linear scalar MSR codes for $k<d<2k-3$, we continue to prove that there exist no linear scalar MSCR codes with invariant repair space for $k<d<2k-2-t$. It's worth noting that our codes presented here are exactly the linear scalar MSCR codes with invariant repair spaces. Thus our construction gives an almost full description of such codes. Moreover, our construction can be viewed as an extension of the product matrix construction for MSR codes proposed by Rashmi et al. \cite{Kumar2011} to cooperative regenerating codes.

The remaining of this paper is organized as follows. First, a brief introduction to the product matrix framework is given in Section II. Then the construction of MSCR codes is presented in three steps respectively in Section \ref{sec3}-\ref{sec5}. Specifically, Section \ref{sec3} deals with the case $t=2$ and $d=2k-3$, Section \ref{sec4} extends to any $2\leq t\leq k-1$ and $d=2k-1-t$, and Section \ref{sec5} further extends to all $2\leq t\leq n-k$ and $d\geq\max\{2k-1-t,k\}$. Section \ref{sec5-c} proves the nonexistence of linear scalar MSCR codes for $k<d<2k-2-t$. Finally, Section \ref{sec6} concludes the paper.

\section{The Product matrix framework}\label{sec2}
Let $[n]$ stand for $\{1,2,...,n\}$. We recall the product matrix framework proposed in \cite{Kumar2011}. First, each codeword is represented by an $n\times\alpha$ matrix $C$ with the $i$-th row stored in the $i$-th node for all $i\in[n]$. Moreover, the codeword is generated as a product of two matrices, i.e.,
$$
C=GM,
$$
where $G$ is an $n\times d$ \emph{encoding matrix}, and $M$ is a $d\times\alpha$ \emph{message matrix}. More specifically, the matrix $M$ contains all message symbols while some entries may be linear combinations of the message symbols, and $G$ is a predefined matrix which is independent of the message. To store a data file (i.e., a message) consisting of $B$ symbols, we firstly arrange the $B$ symbols into the message matrix $M$ properly, and then calculate $C=GM$ to obtain a codeword which will be stored across $n$ nodes. Denote by $\bm{\psi}_{i}$ the $i$-th row of $G$, then the content stored in node $i$ is represented by
$$
\bm{c}_{i}=\bm{\psi}_{i}M,
$$
for $i\in[n]$. Throughout this paper, we use bold letters (eg. $\bm{\psi}, \bm{\varphi}, \bm{c}$, etc.) to denote vectors and capital letters (eg. $C,M,\Phi$, etc.) denote matrices.

Next we use the construction of scalar MSR codes to describe how this framework works for building regenerating codes. For simplicity, we just consider the case $d=2k-2$.

First, the parameters of the scalar MSR code with $d=2k-2$ are as follows:
$$
\alpha=d-k+1=k-1, \ \ \ \beta=1, \ \ \ B=k\alpha=k(k-1).
$$
Let $\alpha_{1},...,\alpha_{n}$ be $n$ distinct nonzero elements in a finite field $\mathbb{F}_q$ such that $\alpha_{i}^{k-1}\neq\alpha_{j}^{k-1}$ for any $ i\neq j$.
Then the $n\times d$ encoding matrix is set to be
\begin{equation}\label{G}
G=\begin{pmatrix}
1 & \alpha_{1} &\cdots & \alpha_{1}^{2k-3} \\
1 & \alpha_{2} &\cdots & \alpha_{2}^{2k-3} \\
\vdots & \vdots &\vdots &\vdots \\
1 & \alpha_{n} &\cdots & \alpha_{n}^{2k-3}
\end{pmatrix}.
\end{equation}
The $d\times\alpha$ message matrix is defined as
$$
M=\begin{pmatrix}
S_{1} \\
S_{2}
\end{pmatrix},
$$
where $S_{1}$ and $S_{2}$ are two $(k-1)\times(k-1)$ symmetric matrices each filled with $\frac{k(k-1)}{2}$ message symbols. Thus the total number of symbols contained in $M$ is $k(k-1)=B$. For $i\in[n]$, the $i$-th node stores a row vector of length $\alpha=k-1$:
$$
\bm{c}_{i}=\bm{\psi}_{i}M=\begin{pmatrix} 1&\alpha_{i}&\cdots&\alpha_{i}^{2k-3} \end{pmatrix}M.
$$

Then we illustrate the above construction gives a scalar MSR code. Actually, we need to verify the following two properties:
\begin{enumerate}
  \item {\it Node repair: }Suppose the $i$-th node fails, then a newcomer can recover the content stored in the $i$-th node by connecting to any $d$ helper nodes and downloading $\beta=1$ symbol from each.

  Let $R\subseteq[n]\setminus\{i\}$ be the set of $d$ helper nodes and define a row vector
$\bm{\varphi}_{i}=(1, \alpha_{i}, \cdots, \alpha_{i}^{k-2}
)$. Then each node $j\in R$ sends the symbol
$$
\bm{c}_{j}\bm{\varphi}_{i}^\tau=\bm{\psi}_{j}M\bm{\varphi}_{i}^\tau
$$
to the newcomer, where $\tau$ denotes the transpose. Thus the newcomer obtains the symbols: $\Psi_{repair}M\bm{\varphi}_{i}^\tau$, where $\Psi_{repair}$ is the matrix $G$ restricted to the rows indexed by the elements in $R$. Since $\Psi_{repair}$ is a $d\times d$ Vandermonde matrix, by multiplying the inverse of $\Psi_{repair}$ the newcomer obtains
$$
M\bm{\varphi}_{i}^\tau=\begin{pmatrix}
S_{1}\bm{\varphi}_{i}^\tau \\ S_{2}\bm{\varphi}_{i}^\tau
\end{pmatrix}.
$$
Since $S_{1}$ and $S_{2}$ are both symmetric, then $\bm{\varphi}_{i}{S_{1}}$ and $\bm{\varphi}_{i}{S_{2}}$ are obtained by a transposition. Finally, the newcomer computes  $\bm{\varphi}_{i}{S_{1}}+\alpha_{i}^{k-1}\bm{\varphi}_{i}{S_{2}}=\bm{\psi}_{i}M=\bm{c}_i$ which are exactly the contents stored in the $i$-th node.

\vspace{5pt}
  \item {\it Data reconstruction: }A data collector can recover the original data file by reading the contents stored in any $k$ nodes.

  Suppose the data collector connects to $k$ nodes $i_{1},i_{2},...,i_{k}\in[n]$. Let $\Psi$ be the $k\times d$ sub-matrix of $G$ consisting of the rows $\bm{\psi}_{i_1},...,\bm{\psi}_{i_k}$, then the data collector reads the $k\alpha$ symbols
\begin{equation*}
\begin{aligned}
\Psi M&=\Psi\begin{pmatrix} S_{1} \\ S_{2} \end{pmatrix} \\
&=\Phi S_{1}+\Delta \Phi S_{2},
\end{aligned}
\end{equation*}
where $\Phi$ is the $\!k\!\times\!(k\!-\!1)\!$ matrix consisting of the first $k\!-\!1$ columns of $\Psi$, i.e.,
\begin{equation}\label{phi}
\Phi=\begin{pmatrix}
1 & \alpha_{i_{1}} &\cdots & \alpha_{i_{1}}^{k-2} \\
1 & \alpha_{i_{2}} &\cdots & \alpha_{i_{2}}^{k-2} \\
\vdots & \vdots &\vdots &\vdots \\
1 & \alpha_{i_{k}} &\cdots & \alpha_{i_{k}}^{k-2}
\end{pmatrix},
\end{equation}
and $\Delta$ is a $k\!\times\! k$ diagonal matrix ${\rm diag}[\alpha_{i_{1}}\!^{\!k\!-\!1}, \alpha_{i_{2}}\!^{\!k\!-\!1},...,\alpha_{i_{k}}\!^{\!k\!-\!1} ]$. Obviously, $\Phi$ is a Vandermonde matrix and $\Psi\!=\!(
\Phi~~\Delta\Phi
)$.  The data collector can recover the data file due to the following lemma which is also a result in \cite{Kumar2011}. We leave the proof of this lemma in Appendix \ref{appenA}.
\end{enumerate}

\begin{lemma}\label{lem3}
Let $\Phi$ be a $k\times(k-1)$ Vandermonde matrix defined as in (\ref{phi}), and $\Delta$ be a $k\times k$ diagonal matrix with distinct and nonzero diagonal elements. Suppose $$
X=\Phi S+\Delta\Phi T,
$$
where $S$ and $T$ are two $(k-1)\times(k-1)$ symmetric matrices. Then $S$ and $T$ can be uniquely computed from $X,\Phi$ and $\Delta$.
\end{lemma}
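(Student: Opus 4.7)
By linearity it suffices to prove injectivity: assuming $\Phi S + \Delta\Phi T = 0$ with $S, T$ symmetric, the goal is to conclude $S = T = 0$. The plan is to recast the hypothesis as a polynomial identity in two variables and then exploit the symmetries of $S$ and $T$ to decouple them. Denote the $k$ distinct parameters defining the Vandermonde matrix $\Phi$ by $\alpha_1, \ldots, \alpha_k$ (so the $h$-th row of $\Phi$ is $(1, \alpha_h, \ldots, \alpha_h^{k-2})$), and write $\Delta = {\rm diag}[\delta_1, \ldots, \delta_k]$. Let $\bm{\phi}(x) := (1, x, \ldots, x^{k-2})$ and introduce the bivariate polynomials $q_S(x, y) := \bm{\phi}(x)\,S\,\bm{\phi}(y)^\tau$ and $q_T(x, y) := \bm{\phi}(x)\,T\,\bm{\phi}(y)^\tau$, each of bi-degree $(k-2, k-2)$. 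Symmetry of $S$ and $T$ translates directly into $q_S(x, y) = q_S(y, x)$ and $q_T(x, y) = q_T(y, x)$, while the hypothesis, with its $h$-th row post-multiplied by $\bm{\phi}(y)^\tau$, becomes the polynomial identity $q_S(\alpha_h, y) + \delta_h\,q_T(\alpha_h, y) \equiv 0$ in $y$ for every $h \in \{1, \ldots, k\}$.

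The key step, which I expect to be the only nontrivial part of the argument, is a swap-and-subtract trick that separates $q_S$ from $q_T$. Specializing $y = \alpha_{h'}$ gives $q_S(\alpha_h, \alpha_{h'}) + \delta_h\,q_T(\alpha_h, \alpha_{h'}) = 0$; swapping the roles of $h$ and $h'$ and using the symmetries of $q_S$ and $q_T$ yields $q_S(\alpha_h, \alpha_{h'}) + \delta_{h'}\,q_T(\alpha_h, \alpha_{h'}) = 0$. Subtracting gives $(\delta_h - \delta_{h'})\,q_T(\alpha_h, \alpha_{h'}) = 0$, and the distinctness of the $\delta_h$'s then forces $q_T(\alpha_h, \alpha_{h'}) = 0$ whenever $h \neq h'$.

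From this point the rest is a routine polynomial degree count. For fixed $h$, the polynomial $q_T(\alpha_h, y)$ has degree $\leq k-2$ in $y$ yet vanishes at the $k-1$ distinct points $\{\alpha_{h'} : h' \neq h\}$, so it is identically zero; reading off its coefficient of $y^{j-1}$ gives, for every column index $j$, a polynomial $\sum_l T_{l, j}\,x^{l-1}$ of degree $\leq k-2$ that vanishes at the $k$ distinct values $\alpha_1, \ldots, \alpha_k$, forcing $T = 0$. Substituting $T = 0$ back into the initial identity leaves $q_S(\alpha_h, y) \equiv 0$ for every $h$, and the same polynomial-degree argument produces $S = 0$.
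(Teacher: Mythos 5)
Your proof is correct and is essentially the paper's argument recast in polynomial rather than matrix language: $q_S(\alpha_i,\alpha_j)$ and $q_T(\alpha_i,\alpha_j)$ are precisely the $(i,j)$-entries of $\Phi S\Phi^\tau$ and $\Phi T\Phi^\tau$ that the paper forms by right-multiplying $X$ by $\Phi^\tau$, your swap-and-subtract step is the paper's comparison of the $(i,j)$ and $(j,i)$ entries of $A+\Delta B$ using the symmetry of $A$ and $B$, and your degree counts replace the paper's appeal to Vandermonde invertibility. The only other difference is cosmetic: you reduce to injectivity of the linear map $(S,T)\mapsto\Phi S+\Delta\Phi T$ (legitimate, since it goes between spaces of equal dimension $k(k-1)$), while the paper reconstructs $S$ and $T$ explicitly.
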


\section{mscr codes with $d=2k-3$ and $t=2$}\label{sec3}
As a warm-up, we first construct a scalar MSCR code for the special case of $t=2$ and $d=2k-3$. By the parameters of MSCR codes given in (\ref{mscr}), we know in this case
$$\alpha=d-k+t=k-1,~ B=k\alpha=k(k-1)\;.$$
Since we are to give the construction using the product matrix framework, the key point is to design the encoding matrix $G$ and the message matrix $M$.

Let $\alpha_{1},\alpha_{2},...,\alpha_{n}$ be $n$ distinct nonzero elements in $\mathbb{F}_q$ such that $\alpha_{i}^{k-2}\neq \alpha_{j}^{k-2}$ for $1\leq i\neq j\leq n$. In particular, set $q-1\geq n$ and $q-1$ coprime with $k-2$, then any $n$ distinct elements in $\mathbb{F}_q^*$ can be chosen as $\alpha_i$'s. Our encoding matrix is defined as follows, which is an $n\times d$ Vandermonde matrix.
$$G =
\begin{pmatrix} 1 & \alpha_{1} &\cdots & \alpha_{1}^{2k-4}
\\ 1 & \alpha_{2} &\cdots& \alpha_{2}^{2k-4} \\
 \vdots & \vdots & \vdots & \vdots \\
  1 & \alpha_{n} &\cdots & \alpha_{n}^{2k-4} \end{pmatrix}.
$$
The message matrix $M$ is a $d\times \alpha$ matrix which has the following form.
$$ M=
\begin{pmatrix} {\displaystyle S_{\scriptscriptstyle k-1}} \\
 {\textstyle 0_{\scriptscriptstyle k-2}} \\ \end{pmatrix}+\begin{pmatrix} {\textstyle 0_{\scriptscriptstyle k-2}} \\ {\displaystyle T_{\scriptscriptstyle k-1}}
 \end{pmatrix},$$
where $S$ and $T$ are two $(k\!-\!1)\!\times\!(k\!-\!1)$ symmetric matrices each filled with $\frac{k(k-1)}{2}$ message symbols chosen from $\mathbb{F}_q$. Thus the total number of message symbols contained in $M$ is $k(k-1)=B$. The subscript $k-1$ or $k-2$ denotes the number of rows, thus here $\textstyle 0_{\scriptscriptstyle k-2}$ means a $(k\!-\!2)\!\times\!(k\!-\!1)$ all-zero matrix.

\begin{remark}\label{remark1}
Note that the symmetric matrices $S$ and $T$ respectively form the first and the last $k-2$ rows of $M$, while interweave in the $(k\!-\!1)$-th row of $M$.
This partial interweaving structure is the key idea in our design of MSCR codes. Actually, comparing with MSR codes which have $\alpha=d-k+1$, MSCR codes have $\alpha=d-k+t$ for $t\geq 2$. That is, fixing $\alpha$, the MSCR codes have $d$ less than the MSR codes by $t-1$. Since the message matrix has $d$ rows, the reduction on $d$ for MSCR codes is realized by interweaving $S$ and $T$ in $t-1$ rows. On the other hand, the interweaved message symbols just can be unpicked through the exchanging phase between $t$ newcomers in the cooperative repair.
\end{remark}

In the next, we illustrate the data reconstruction and the $t$-node cooperative repair for the construction.

\begin{theorem}(Data reconstruction)
The $B$ message symbols can be recovered from any $k$ nodes.
\end{theorem}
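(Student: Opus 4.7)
The plan is to reduce the reconstruction problem to Lemma~\ref{lem3} by observing that, despite the overlap of $S$ and $T$ in the middle row of $M$, the product $\Psi M$ still decomposes cleanly into the form $\Phi S+\Delta\Phi T$.

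Suppose the data collector connects to nodes $i_1,\ldots,i_k$, and let $\Psi$ be the $k\times(2k-3)$ submatrix of $G$ consisting of the rows $\bm{\psi}_{i_1},\ldots,\bm{\psi}_{i_k}$. First I would write
\[
\Psi M \;=\; \Psi\begin{pmatrix}S\\ 0_{k-2}\end{pmatrix}+\Psi\begin{pmatrix}0_{k-2}\\ T\end{pmatrix},
\]
so that the two summands involve, respectively, only the first $k-1$ columns and only the last $k-1$ columns of $\Psi$. Let $\Phi$ be the $k\times(k-1)$ matrix formed by the first $k-1$ columns of $\Psi$, which is the Vandermonde matrix of the form (\ref{phi}). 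Then the first summand equals $\Phi S$.

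Next I would observe that the last $k-1$ columns of $\Psi$ are obtained from $\Phi$ by multiplying row $s$ by $\alpha_{i_s}^{k-2}$, i.e., they form the matrix $\Delta\Phi$, where $\Delta={\rm diag}[\alpha_{i_1}^{k-2},\ldots,\alpha_{i_k}^{k-2}]$. Thus the second summand equals $\Delta\Phi T$, and we arrive at
\[
\Psi M \;=\; \Phi S+\Delta\Phi T.
\]
By the standing assumption that $\alpha_1,\ldots,\alpha_n$ are distinct nonzero elements with $\alpha_i^{k-2}\neq\alpha_j^{k-2}$ for $i\neq j$, the diagonal entries of $\Delta$ are distinct and nonzero, and $\Phi$ is Vandermonde of the required shape. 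Applying Lemma~\ref{lem3} then yields $S$ and $T$ uniquely from the received data, recovering all $B=k(k-1)$ message symbols.

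I expect no real obstacle here beyond spotting the decomposition; the only slightly subtle point is the treatment of the shared middle row $k-1$ of $M$, but this is handled automatically by the additive split of $M$ above. The construction was engineered precisely so that the Vandermonde structure of $G$ produces the diagonal twist $\Delta$ needed to invoke Lemma~\ref{lem3}.
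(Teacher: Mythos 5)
Your proposal is correct and follows essentially the same route as the paper: split $M$ additively into the $S$-part and $T$-part, observe that the first $k-1$ columns of $\Psi$ give $\Phi$ and the last $k-1$ columns give $\Delta\Phi$, and then invoke Lemma~\ref{lem3}. You add a couple of helpful explicit remarks (that the two summands touch disjoint column blocks of $\Psi$, and that the hypothesis $\alpha_i^{k-2}\neq\alpha_j^{k-2}$ ensures $\Delta$ has distinct nonzero diagonal entries), but the argument is the same.
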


\begin{proof} For any $k$ nodes $i_{1}, i_{2},...,i_{k}\in[n]$, denote by $\Psi$ the sub-matrix of $G$ restricted to the rows corresponding to the $k$ nodes.
Then the data collector can obtain the symbols
\begin{equation*}
\begin{aligned}
\Psi M&=\Psi\begin{pmatrix} {\displaystyle S} \\
 {\textstyle 0} \\ \end{pmatrix}+\Psi\begin{pmatrix} {\textstyle 0} \\ {\displaystyle T}
  \\ \end{pmatrix} \\
&=\Phi S+\Delta \Phi T,
\end{aligned}
\end{equation*}
where $\Phi$ denotes the $k\times(k-1)$ matrix formed by the first $k-1$ columns of $\Psi$, and $\Delta={\rm diag}[\alpha_{i_{1}}^{k-2}, \alpha_{i_{2}}^{k-2},...,\alpha_{i_{k}}^{k-2} ]$. For the second equality above, one needs to note the sub-matrix formed by the last $k-1$ columns of $\Psi$ equals $\Delta\Phi$.

Then the theorem follows from Lemma \ref{lem3}.
\end{proof}

\begin{theorem}(Cooperative repair)
Suppose node $i_{1}$ and $i_{2}$ fail. Then two newcomers can regenerate node $i_1$ and $i_2$ respectively through a cooperative repair in two phases:

{Phase 1: } Each newcomer connects to any $d=2k-3$ surviving nodes as helper nodes and downloads one symbol from each helper node.

{Phase 2: } Two newcomers exchange one symbol with each other.
\end{theorem}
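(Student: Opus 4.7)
The plan is to exhibit both phases of the protocol explicitly and verify that each newcomer recovers the full row $\bm{c}_{i_j}$. For Phase 1, each newcomer $j\in\{1,2\}$ picks any $d=2k-3$ surviving nodes $R_j\subseteq[n]\setminus\{i_1,i_2\}$ as helpers, forms the repair vector $\bm{\varphi}_{i_j}=(1,\alpha_{i_j},\ldots,\alpha_{i_j}^{k-2})$, and requests from each helper $h\in R_j$ the scalar $\bm{c}_h\bm{\varphi}_{i_j}^{\tau}=\bm{\psi}_h M\bm{\varphi}_{i_j}^{\tau}$. The $d$ received symbols stack into $\Psi_{R_j}M\bm{\varphi}_{i_j}^{\tau}$, where $\Psi_{R_j}$ is a $d\times d$ Vandermonde sub-matrix of $G$ and hence invertible; inverting yields the full column $M\bm{\varphi}_{i_j}^{\tau}$.

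Next, I would parse this $(2k-3)$-vector against the block structure of $M$. The first $k-2$ entries equal entries $1,\ldots,k-2$ of $S\bm{\varphi}_{i_j}^{\tau}$; the last $k-2$ entries equal entries $2,\ldots,k-1$ of $T\bm{\varphi}_{i_j}^{\tau}$; and the single middle entry equals $(S\bm{\varphi}_{i_j}^{\tau})_{k-1}+(T\bm{\varphi}_{i_j}^{\tau})_1$. Setting $u_j=(S\bm{\varphi}_{i_j}^{\tau})_{k-1}$ and $v_j=(T\bm{\varphi}_{i_j}^{\tau})_1$, Phase 1 therefore leaves each newcomer with one undisentangled quantity: only $u_j+v_j$ is known. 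Using $S^{\tau}=S$, $T^{\tau}=T$, I identify these column entries with $\bm{\varphi}_{i_j}S$ and $\bm{\varphi}_{i_j}T$, so that the target $\bm{c}_{i_j}=\bm{\varphi}_{i_j}S+\alpha_{i_j}^{k-2}\bm{\varphi}_{i_j}T$ will be fully determined the instant $u_j$ and $v_j$ are individually recovered.

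For Phase 2, I would have newcomer $j$ compute and forward to newcomer $3-j$ the single scalar $\bm{\psi}_{i_{3-j}}M\bm{\varphi}_{i_j}^{\tau}$; this is computable because $M\bm{\varphi}_{i_j}^{\tau}$ is already in hand and $\bm{\psi}_{i_{3-j}}$ is the public $i_{3-j}$-th row of $G$. Viewed from the receiving side, the transmitted scalar equals $\bm{c}_{i_{3-j}}\bm{\varphi}_{i_j}^{\tau}=\bm{\varphi}_{i_{3-j}}S\bm{\varphi}_{i_j}^{\tau}+\alpha_{i_{3-j}}^{k-2}\bm{\varphi}_{i_{3-j}}T\bm{\varphi}_{i_j}^{\tau}$. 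Expanding each bilinear form along $\bm{\varphi}_{i_j}$ and subtracting everything already known from Phase 1 (all middle-index entries of $\bm{\varphi}_{i_{3-j}}S$ and $\bm{\varphi}_{i_{3-j}}T$), the residual information reduces to $\alpha_{i_j}^{k-2}u_{3-j}+\alpha_{i_{3-j}}^{k-2}v_{3-j}$.

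Combining this with the Phase 1 equation $u_{3-j}+v_{3-j}=\text{known}$ gives a $2\times2$ linear system in $(u_{3-j},v_{3-j})$ whose coefficient matrix is $\bigl(\begin{smallmatrix}1&1\\ \alpha_{i_j}^{k-2}&\alpha_{i_{3-j}}^{k-2}\end{smallmatrix}\bigr)$, with determinant $\alpha_{i_{3-j}}^{k-2}-\alpha_{i_j}^{k-2}$. This is nonzero precisely because of the standing condition $\alpha_i^{k-2}\neq\alpha_j^{k-2}$ for $i\neq j$ imposed at the outset, so the system is uniquely solvable; substituting the recovered $u_{3-j},v_{3-j}$ back into $\bm{c}_{i_{3-j}}=\bm{\varphi}_{i_{3-j}}S+\alpha_{i_{3-j}}^{k-2}\bm{\varphi}_{i_{3-j}}T$ completes the repair. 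The main obstacle—and the essential justification for the field-element condition—is exactly this invertibility: a single scalar exchanged per direction in Phase 2 must be linearly independent of the known sum, which is guaranteed only by the $(k-2)$-th power distinctness of the $\alpha_i$'s.
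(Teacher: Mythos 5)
Your proof is correct and follows essentially the same route as the paper: invert the Vandermonde system to get $M\bm{\varphi}_{i_j}^\tau$, parse it against the overlapping block structure of $M$ to see that after Phase 1 everything is known except the sum $u_j+v_j$ where $u_j=(\bm{\varphi}_{i_j}S)_{k-1}$ and $v_j=(\bm{\varphi}_{i_j}T)_1$, then use the Phase 2 scalar to get a second independent linear constraint and solve a $2\times2$ system whose invertibility hinges on $\alpha_{i_1}^{k-2}\neq\alpha_{i_2}^{k-2}$. The only cosmetic difference is your choice of unknowns $(u_j,v_j)$ rather than the paper's $(c_{i_j,1},c_{i_j,k-1})$; the two parametrizations differ by an invertible affine change of variables and yield the same determinant condition.
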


\begin{proof}
For simplicity, the two newcomers are also called node $i_1$ and $i_2$ respectively.
For $i\in[n]$, denote by $\bm{\psi}_{i}$ the $i$-th row of $G$, then the data stored in the $i$-th node is
$\bm{c}_{i}=\bm{\psi}_{i}M=(c_{i,1},...,c_{i,k-1})$. Moreover, define $\bm{\varphi}_{i}=(1, \alpha_{i}, ..., \alpha_{i}^{k-2})$, then  $$\bm{c}_{i}=\bm{\psi}_{i}M=\bm{\varphi}_{i}S+\alpha_{i}^{k-2}\bm{\varphi}_{i}T.$$ Therefore, node $i_{1}$ needs to recover $\bm{c}_{i_1}=\bm{\varphi}_{i_{1}}S+\alpha_{i_{1}}^{k-2}\bm{\varphi}_{i_{1}}T$, and node $i_{2}$ needs to recover $\bm{c}_{i_2}=\bm{\varphi}_{i_{2}}S+\alpha_{i_{2}}^{k-2}\bm{\varphi}_{i_{2}}T$.

In Phase 1, suppose the set of helper nodes connected by node $i_1$ is $R_1\subseteq[n]\setminus\{i_1,i_2\}$ with $|R_1|=2k-3$. Then for each $j\in R_1$, node $j$ sends the symbol
\begin{equation}\label{eq2}
\bm{c}_{j}\bm{\varphi}_{i_{1}}^\tau=\bm{\psi}_{j}M\bm{\varphi}_{i_{1}}^\tau
\end{equation} to node $i_1$.
Thus node $i_{1}$ obtains the symbols $\Psi_{repair}M\bm{\varphi}_{i_{1}}^\tau$, where $\Psi_{repair}$ is a $(2k-3)\times(2k-3)$ invertible matrix consisting of the rows $\bm{\psi}_{j}, j\in R_1$.
Therefore, node $i_{1}$ can derive the symbols $M\bm{\varphi}_{i_{1}}^\tau$ by multiplying the inverse of $\Psi_{repair}$.

Furthermore, we will show the symbols $M\bm{\varphi}_{i_{1}}^\tau$ can be used to derive partial data of $\bm{c}_{i_1}$. For convenience, denote
$$\begin{aligned}
M&=\begin{pmatrix} {\displaystyle S} \\
 {\textstyle 0} \\ \end{pmatrix}+\begin{pmatrix} {\textstyle 0} \\ {\displaystyle T}
  \\ \end{pmatrix}
 &=\begin{pmatrix} \bm{u}_{1} \\ \vdots \\ \bm{u}_{k-2}  \\ \bm{u}_{k-1} \\ \bm{0}\\ \vdots\\ \bm{0}\\ \end{pmatrix}+\begin{pmatrix} \bm{0} \\ \vdots \\ \bm{0}  \\ \bm{v}_{1} \\ \bm{v}_{2}\\ \vdots\\ \bm{v}_{k-1}\\ \end{pmatrix}
 &=\begin{pmatrix} \bm{u}_{1} \\ \vdots \\ \bm{u}_{k-2}  \\ \bm{u}_{k-1}+\bm{v}_{1} \\ \bm{v}_{2}\\ \vdots\\ \bm{v}_{k-1}\\ \end{pmatrix},
\end{aligned}$$
where $\bm{u}_{i}$ and $\bm{v}_{i}$ are the $i$-th row of $S$ and $T$ respectively, $i\in[k-1]$. Due to the symmetry of $S$ and $T$, $\bm{u}_{i}^\tau$ and $\bm{v}_{i}^\tau$ are also the $i$-th column of $S$ and $T$ respectively. Therefore,
\begin{equation}\label{eqw}
 M\bm{\varphi}_{i_{1}}^\tau\!=\!\begin{pmatrix} \bm{u}_{1}\cdot\bm{\varphi}_{i_{1}}^\tau \\ \vdots \\ \bm{u}_{k-2}\cdot\bm{\varphi}_{i_{1}}^\tau  \\ (\bm{u}_{k-1}+\bm{v}_{1})\cdot\bm{\varphi}_{i_{1}}^\tau \\ \bm{v}_{2}\cdot\bm{\varphi}_{i_{1}}^\tau\\ \vdots\\ \bm{v}_{k-1}\cdot\bm{\varphi}_{i_{1}}^\tau\\ \end{pmatrix}
 \!=\!\begin{pmatrix} \bm{\bm{\varphi}}_{i_{1}}\cdot \bm{u}_{1}^\tau \\ \vdots \\ \bm{\bm{\varphi}}_{i_{1}}\cdot \bm{u}_{k-2}^\tau  \\ \bm{\varphi}_{i_{1}}\cdot (\bm{u}_{k-1}^\tau+\bm{v}_{1}^\tau) \\ \bm{\varphi}_{i_{1}}\cdot \bm{v}_{2}^\tau\\ \vdots\\ \bm{\varphi}_{i_{1}}\cdot \bm{v}_{k-1}^\tau\\ \end{pmatrix}\!=\!\begin{pmatrix}{\omega}_1\\\vdots\\{\omega}_{k-2}\\{\omega}_{k-1}\\{\omega}_{k}\\\vdots\\{\omega}_{2k-3}\end{pmatrix},
\end{equation}
where the second equality comes from the transposition, and ${\omega}_{i}$ denotes the $i$-th coordinate of $M\bm{\varphi}_{i_{1}}^\tau$ for $i\in[2k-3]$. Recall that node $i_1$ is to recover $\bm{c}_{i_1}=(c_{i_{1},1},...,c_{i_{1},k-1})=\bm{\varphi}_{i_{1}}S+\alpha_{i_{1}}^{k-2}\bm{\varphi}_{i_{1}}T$. That is, for $j\in[k-1]$, $c_{i_{1},j}=\bm{\varphi}_{i_{1}}\bm{u}_{j}^\tau+\alpha_{i_{1}}^{k-2}\bm{\varphi}_{i_{1}}\bm{v}_{j}^\tau$. Actually, after obtaining $M\bm{\varphi}_{i_{1}}^\tau=(\omega_1,...,\omega_{2k-3})^\tau$ node $i_1$ computes the following
$$\begin{cases}
 \omega_{1}+\alpha_{i_{1}}^{k-2}\omega_{k-1}+\alpha_{i_{1}}^{2(k-2)}\omega_{2k-3}, \\
 \omega_{2}+\alpha_{i_{1}}^{k-2}\omega_{k},\\
 \omega_{3}+\alpha_{i_{1}}^{k-2}\omega_{k+1},\\
 ~~~~~\vdots \\
 \omega_{k-2}+\alpha_{i_{1}}^{k-2}\omega_{2k-4}.
\end{cases} $$
It can be verified that
\begin{eqnarray*}
&&\omega_{1}+\alpha_{i_{1}}^{k-2}\omega_{k-1}+\alpha_{i_{1}}^{2(k-2)}\omega_{2k-3}\\
&=&\bm{\varphi}_{i_{1}}\cdot \bm{u}_{1}^\tau+\alpha_{i_{1}}^{k-2}\bm{\varphi}_{i_{1}}\cdot (\bm{u}_{k-1}^\tau+\bm{v}_{1}^\tau)+\alpha_{i_{1}}^{2(k-2)}\bm{\varphi}_{i_{1}}\cdot \bm{v}_{k-1}^\tau \\
&=&\bm{\varphi}_{i_{1}}\cdot(\bm{u}_{1}^\tau+\alpha_{i_{1}}^{k-2}\bm{v}_{1}^\tau)+\alpha_{i_{1}}^{k-2}\bm{\varphi}_{i_{1}}\cdot(\bm{u}_{k-1}^\tau+\alpha_{i_{1}}^{k-2}\bm{v}_{k-1}^\tau) \\
&=&c_{i_{1},1}+\alpha_{i_{1}}^{k-2}c_{i_{1},k-1}.
\end{eqnarray*}
while for $j\in\{2,...,k-2\}$,
$$
\omega_{j}+\alpha_{i_{1}}^{k-2}\omega_{j+k-2}=\bm{\varphi}_{i_{1}}\cdot \bm{u}_{j}^\tau+\alpha_{i_{1}}^{k-2}\bm{\varphi}_{i_{1}}\cdot \bm{v}_{j}^\tau\\
=c_{i_{1},j}\;.$$
That is, after Phase 1 node $i_1$ gets $d$ symbols $M\bm{\varphi}_{i_{1}}^\tau$, from which it further recovers $c_{i_1,2},...,c_{i_1,k-2}$ and $c_{i_{1},1}+\alpha_{i_{1}}^{k-2}c_{i_{1},k-1}$.

Similarly, by connecting to $d$ helper nodes and downloading $\bm{c}_j\bm{\varphi}_{i_2}^\tau$ from each helper node $j$, node $i_2$ can get $d$ symbols $M\bm{\varphi}_{i_{2}}^\tau$, from which it can further recover $c_{i_2,2},...,c_{i_2,k-2}$ and $c_{i_2,1}+\alpha_{i_2}^{k-2}c_{i_{2},k-1}$.

\vspace{4pt}

In Phase 2, node $i_1$ sends the symbol $\bm{\psi}_{i_{2}}M\bm{\varphi}_{i_{1}}^\tau$ to $i_{2}$, and node $i_2$ sends the symbol $\bm{\psi}_{i_{1}}M\bm{\varphi}_{i_{2}}^\tau$ to $i_{1}$.
So, node $i_1$ gets
$$
\bm{\psi}_{i_{1}}M\bm{\varphi}_{i_{2}}^\tau=\bm{c}_{i_{1}}\bm{\varphi}_{i_{2}}^\tau=(c_{i_{1},1},...,c_{i_{1},k-1})\begin{pmatrix}1\\\alpha_{i_{2}}\\ \vdots \\ \alpha_{i_{2}}^{k-2} \end{pmatrix},
$$
from which node $i_1$ can obtain the value of $c_{i_{1},1}+\alpha_{i_{2}}^{k-2}c_{i_{1},k-1}$ because node $i_1$ has already obtained $c_{i_{1},j}$ for $ j\in\{2,..., k-2\}$ after Phase 1.
Thus $c_{i_{1},1}$ and $c_{i_{1},k-1}$ can be solved from
\[\begin{cases}
 c_{i_{1},1}+\alpha_{i_{1}}^{k-2}c_{i_{1},k-1} \\
 c_{i_{1},1}+\alpha_{i_{2}}^{k-2}c_{i_{1},k-1}
\end{cases} \]
since $\alpha_{i_{1}}^{k-2}\neq\alpha_{i_{2}}^{k-2}$.

In the same way, node $i_{2}$ can further recover $c_{i_{2},1}$ and  $c_{i_{2},k-1}$ after Phase 2.

\end{proof}

\section{MSCR codes with $d=2k-1-t$ and $2\leq t\leq k-1$}\label{sec4}
In this section, the construction for $t=2$ is extended to that for more general $t$, i.e. $2\leq t\leq k-1$, while $d$ is restricted to $d=2k-1-t$. The restrictions on the parameters are explained in Remark \ref{remark2}.  Obviously, in this case we still have $\alpha=d-k+t=k-1$ and $B=k\alpha=k(k-1)$.

We first define the $n\times d$ encoding matrix $G$ and the $d\times\alpha$ message matrix $M$. For simplicity, denote $\mu=k-t$. Let $\alpha_{1},...,\alpha_{n}$ be $n$ distinct nonzero elements in $\mathbb{F}_q$ such that $\alpha_{i}^{\mu}\neq \alpha_{j}^{\mu}$ for $1\leq i\neq j\leq n$. Then, define
$$G=\begin{pmatrix}
1 & \alpha_{1} &\cdots & \alpha_{1}^{2k-2-t} \\
1 & \alpha_{2} &\cdots & \alpha_{2}^{2k-2-t} \\
 \vdots & \vdots & \vdots & \vdots \\
1 & \alpha_{n} &\cdots
 & \alpha_{n}^{2k-2-t}
\end{pmatrix},~~M=
\begin{pmatrix} {\displaystyle S_{\scriptscriptstyle k-1}} \\
 {\textstyle 0_{\scriptscriptstyle \mu}} \\ \end{pmatrix}+\begin{pmatrix} {\textstyle 0_{\scriptscriptstyle \mu}} \\ {\displaystyle T_{\scriptscriptstyle k-1}}
 \end{pmatrix},$$
where $S$ and $T$ are two $(k-1)\times(k-1)$ symmetric matrices each filled with $\frac{k(k-1)}{2}$ message symbols chosen from $\mathbb{F}_q$. Thus the total number of message symbols contained in $M$ is $k(k-1)=B$. Note that $k-1+\mu=k-1+k-t=d$, thus $M$ is a $d\times\alpha$ matrix.

The data reconstruction can be verified as before. Specifically, suppose the data collector connects to $k$ nodes $i_{1},...,i_{k}\in[n]$. Let $\Psi$ be the $k\times d$ sub-matrix of $G$ consisting of the rows $\bm{\psi}_{i_j}, j\in[k]$, then  the data collector has access to the symbols
\begin{equation*}
\begin{aligned}
\Psi M&=\Psi\begin{pmatrix} {\displaystyle S} \\
 {\textstyle 0} \\ \end{pmatrix}+\Psi\begin{pmatrix} {\textstyle 0} \\ {\displaystyle T}
  \\ \end{pmatrix} \\
&=\Phi S+\Delta' \Phi T,
\end{aligned}
\end{equation*}
where $\Phi$ is a $k\times(k-1)$ matrix consisting of the first $k-1$ columns of $\Psi$ and $\Delta'$ is the $k\times k$ diagonal matrix ${\rm diag}[\alpha_{i_{1}}^{\mu}, \alpha_{i_{2}}^{\mu},...,\alpha_{i_{k}}^{\mu} ]$. The diagonal elements of $\Delta'$ are all distinct and nonzero by our construction. Thus the matrix $S$ and $T$ can be reconstructed from the $k$ nodes by Lemma \ref{lem3}.

\begin{remark}\label{remark2}
The reason $d$ is restricted to $d=2k-1-t$ is that we want to keep $\alpha=d-k+t=k-1$ unchanged so that Lemma \ref{lem3} still can be used to ensure the data reconstruction. On the other hand, in this section we have $2\leq t\leq k-1$ rather than the most general condition $2\leq t\leq n-k$ \footnote{The condition $t\leq n-k$ is necessary to ensure the surviving nodes (i.e., $n-t\geq k$ nodes) can recover the original data file.}, because $t\leq k-1$ implies $t-1<k-1$ which means the two symmetric matrices $S$ and $T$ are not totally interweaved and thus the diagonal matrix $\Delta'$ has $k$ distinct diagonal elements as required by Lemma \ref{lem3}.
\end{remark}

Next we mainly illustrate the cooperative repair of $t$ nodes.

\begin{theorem}(Cooperative repair)
Without loss of generality, suppose the set of failed nodes is $\{1,2,...,t\}$.  Then $t$ newcomers can recover the $t$ failed nodes respectively by each connecting to $d$ surviving nodes as helpers nodes and downloading one symbol from each of the helper nodes in Phase 1, and then downloading one symbol from each of the other $t-1$ newcomers in Phase 2. Therefore, the repair bandwidth for recovering every failed node is $d+t-1$.
\end{theorem}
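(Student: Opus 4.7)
The plan is to mirror the $t=2$ argument from Section III: in Phase 1 each newcomer recovers $M\bm{\varphi}_{i_l}^{\tau}$ by Vandermonde inversion and uses the symmetry of $S$ and $T$ to read off partial information about $\bm{c}_{i_l}$; in Phase 2 the $t$ newcomers exchange one symbol each, and the remaining linear system reduces to a $(t-1)\times(t-1)$ Vandermonde system in the $\alpha_{i_m}$'s. Throughout, I will write $c_{i_l,r}=p_r+\alpha_{i_l}^{\mu}q_r$ where $p_r=(\bm{\varphi}_{i_l}S)_r=\bm{\varphi}_{i_l}\bm{u}_r^{\tau}$ and $q_r=(\bm{\varphi}_{i_l}T)_r=\bm{\varphi}_{i_l}\bm{v}_r^{\tau}$ for $r\in\{1,\ldots,k-1\}$, so that recovering $\bm{c}_{i_l}$ reduces to determining these $p_r$'s and $q_r$'s.

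For Phase 1, fix a newcomer $i_l$ and let $R_l\subseteq[n]\setminus\{1,\ldots,t\}$ be any helper set of size $d$. Each helper $j$ returns $\bm{\psi}_j M\bm{\varphi}_{i_l}^{\tau}$, and the $d\times d$ Vandermonde sub-matrix $\Psi_{repair}$ of $G$ indexed by $R_l$ is invertible, so the newcomer obtains $M\bm{\varphi}_{i_l}^{\tau}$. Matching the block structure of $M$ (top $\mu$ rows pure $S$, middle $t-1$ rows interleaved $\bm{u}_{\mu+j}+\bm{v}_j$, bottom $\mu$ rows pure $T$) against the $d=2\mu+(t-1)$ coordinates of $M\bm{\varphi}_{i_l}^{\tau}$, and using the symmetry of $S,T$ to turn row products into column products, this delivers $p_r$ for $r\in\{1,\ldots,\mu\}$, $q_r$ for $r\in\{t,\ldots,k-1\}$, and the $t-1$ mixed sums $p_{\mu+j}+q_j$ for $j\in\{1,\ldots,t-1\}$. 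Combined with $c_{i_l,r}=p_r+\alpha_{i_l}^{\mu}q_r$ this immediately fixes $c_{i_l,r}$ for every pure index $r\in\{t,\ldots,\mu\}$ and leaves the $2(t-1)$ scalar unknowns $\{p_{\mu+j},q_j\}_{j=1}^{t-1}$ tied by $t-1$ linear relations.

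In Phase 2, newcomer $i_l$ assembles $\bm{\psi}_{i_m}M\bm{\varphi}_{i_l}^{\tau}$ from the already-computed $M\bm{\varphi}_{i_l}^{\tau}$ and the public row $\bm{\psi}_{i_m}$, and forwards it to $i_m$; reciprocally $i_l$ receives $\bm{\psi}_{i_l}M\bm{\varphi}_{i_m}^{\tau}=\bm{c}_{i_l}\bm{\varphi}_{i_m}^{\tau}=\sum_{r=1}^{k-1}c_{i_l,r}\alpha_{i_m}^{r-1}$ from every $m\in\{1,\ldots,t\}\setminus\{l\}$. After substituting the $p_r$'s and $q_r$'s known from Phase 1 and eliminating $p_{\mu+j}=K_j-q_j$ via the mixed sums, each Phase 2 equation should collapse to $(\alpha_{i_l}^{\mu}-\alpha_{i_m}^{\mu})\sum_{j=1}^{t-1}q_j\alpha_{i_m}^{j-1}=(\mathrm{known})_m$. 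Because the $\alpha_i^{\mu}$ are pairwise distinct by the choice of the $\alpha_i$'s, the prefactor is nonzero, and ranging $m$ over $\{1,\ldots,t\}\setminus\{l\}$ produces a $(t-1)\times(t-1)$ Vandermonde system in the distinct $\alpha_{i_m}$, hence invertible. Solving for the $q_j$'s yields the $p_{\mu+j}$'s, and thence every $c_{i_l,r}$.

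The main obstacle I anticipate is bookkeeping: I need to track which $p_r,q_r$ are known after Phase 1 across the full parameter range $2\le t\le k-1$ (in particular the "pure window" $\{t,\ldots,\mu\}$ becomes empty once $2t>k$), and verify that the elimination produces the Vandermonde form uniformly in both regimes $\mu\ge t$ and $\mu<t$. The saving grace is that the mixed sums always pair $p_{\mu+j}$ with $q_j$ at the same shift $\mu$, which is precisely what makes the factor $\alpha_{i_l}^{\mu}-\alpha_{i_m}^{\mu}$ appear, so the argument should go through without any case split.
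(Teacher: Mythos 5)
Your proposal is correct, and the collapse you anticipated does hold, uniformly in the parameters: substituting $p_{\mu+j}=K_j-q_j$ into $\bm{c}_{i_l}\bm{\varphi}_{i_m}^\tau=\sum_{r=1}^{k-1}(p_r+\alpha_{i_l}^{\mu}q_r)\alpha_{i_m}^{r-1}$, the unknown terms combine as $\sum_{j=1}^{t-1}q_j\bigl(\alpha_{i_l}^{\mu}\alpha_{i_m}^{j-1}-\alpha_{i_m}^{\mu+j-1}\bigr)=(\alpha_{i_l}^{\mu}-\alpha_{i_m}^{\mu})\sum_{j=1}^{t-1}q_j\alpha_{i_m}^{j-1}$, and no case split between $\mu\ge t$ and $\mu<t$ is needed, because the only structural facts used are that the unknown $p$'s are exactly $p_{\mu+1},\dots,p_{k-1}$, the unknown $q$'s exactly $q_1,\dots,q_{t-1}$, and the mixed sums match them one-to-one (the emptiness of the ``pure window'' $\{t,\dots,\mu\}$ is irrelevant to the final solve). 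Where you differ from the paper is the organization of the solvability argument: the paper keeps the stored coordinates $c_{i,1},\dots,c_{i,k-1}$ as the unknowns, converts the Phase-1 data into $\mu$ aggregated equations by grouping the coordinates of $M\bm{\varphi}_i^\tau$ in residue classes modulo $\mu$ (writing $d=z\mu+r$), stacks these with the $t-1$ Phase-2 rows, and proves the resulting $(k-1)\times(k-1)$ coefficient matrix $H$ invertible by block column elimination, which terminates in the product of the diagonal matrix $\widetilde{\Delta}-\alpha_i^{\mu}I_{t-1}$ with a $(t-1)\times(t-1)$ Vandermonde matrix. Your route instead takes the projections $p_r=\bm{\varphi}_{i_l}\bm{u}_r^\tau$, $q_r=\bm{\varphi}_{i_l}\bm{v}_r^\tau$ as unknowns, eliminates via the $t-1$ mixed sums, and lands directly on an explicit $(t-1)\times(t-1)$ Vandermonde system scaled by the nonzero factors $\alpha_{i_l}^{\mu}-\alpha_{i_m}^{\mu}$. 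Both arguments rest on the identical core facts (pairwise distinctness of the $\alpha_i^{\mu}$ and Vandermonde invertibility); yours avoids the $z,r$ bookkeeping and yields an explicit decoding recipe, while the paper's formulation makes the count $\mu+(t-1)=k-1$ of repair equations on $\bm{c}_i$ and their full rank more directly visible.
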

\begin{proof}
For simplicity, the $t$ newcomers are also called node $1,...,t$ respectively in the following. Using the notations defined before, the data stored in node $i$ is \begin{equation}\label{eqc}\bm{c}_{i}=\bm{\psi}_{i}M=\bm{\bm{\varphi}}_{i}S+\alpha_{i}^{\mu}\bm{\varphi}_{i}T=(c_{i,1},...,c_{i,k-1})\;.\end{equation}
That is, node $i$ is to recover $\bm{c}_{i}$ for $i\in[t]$.

In Phase 1,  fix $i\in[t]$, and let $R_i\subseteq[n]\setminus[t]$ with $|R_i|=d$ be the set of helper nodes connected by node $i$. Then
for each $j\in R_i$, node $j$ sends node $i$ the following symbol
\begin{equation}\label{eq4}
c_{j}\bm{\varphi}_{i}^\tau=\bm{\psi}_{j}M\bm{\varphi}_{i}^\tau\;.
\end{equation}
Thus node $i$ receives the symbols $\{\bm{\psi}_{j}M\bm{\varphi}_{i}^\tau|j\in R_i\}$. Since all the $\bm{\psi}_{j}$'s, $j\in R_i$, form a $d\times d$ invertible matrix, then node $i$ can obtain the symbols $M\bm{\varphi}_{i}^\tau$.

Similar to (\ref{eqw}), it has
$$
 M\bm{\varphi}_{i}^\tau=\begin{pmatrix}
  \bm{u}_{1}\cdot\bm{\varphi}_{i}^\tau \\ \vdots \\ \bm{u}_{\mu}\cdot\bm{\varphi}_{i}^\tau  \\ (\bm{u}_{\mu+1}+\bm{v}_{1})\cdot\bm{\varphi}_{i}^\tau \\ \vdots \\ (\bm{u}_{k-1}+\bm{v}_{t-1})\cdot\bm{\varphi}_{i}^\tau \\ \bm{v}_{t}\cdot\bm{\varphi}_{i}^\tau\\ \vdots\\ \bm{v}_{k-1}\cdot\bm{\varphi}_{i}^\tau\\
 \end{pmatrix}
 =\begin{pmatrix}
  \bm{\varphi}_{i}\cdot \bm{u}_{1}^\tau \\ \vdots \\ \bm{\varphi}_{i}\cdot \bm{u}_{\mu}^\tau  \\ \bm{\varphi}_{i}\cdot (\bm{u}_{\mu+1}^\tau+\bm{v}_{1}^\tau) \\ \vdots \\ \bm{\varphi}_{i}\cdot (\bm{u}_{k-1}^\tau+\bm{v}_{t-1}^\tau) \\ \bm{\varphi}_{i}\cdot \bm{v}_{t}^\tau\\ \vdots\\
  \bm{\varphi}_{i}\cdot \bm{v}_{k-1}^\tau\end{pmatrix}
 =\begin{pmatrix}
 \omega_{1} \\ \vdots \\ \omega_{\mu}  \\ \omega_{\mu+1} \\ \vdots \\ \omega_{k-1} \\ \omega_{k}\\ \vdots\\ \omega_{2k-1-t}\\
 \end{pmatrix},
 $$
where the second equality comes from the transposition. Therefore, for each $l\in[2k-1-t]$,
 $$\omega_{l}= \left\{\begin{array}{ll}
 \bm{\varphi}_{i}\bm{u}_{l}^\tau,~~&  1\leq l\leq \mu \\
 \bm{\varphi}_{i}(\bm{u}_{l}^\tau+\bm{v}_{l-\mu}^\tau),~~&  \mu+1\leq l\leq k-1 \\
 \bm{\varphi}_{i}\bm{v}_{l-\mu}^\tau,~~&  k\leq l\leq 2k-1-t.
 \end{array}\right.$$
Then we will give some calculations on the symbols $\omega_1,...,\omega_{2k-1-t}$ which help to recover $\bm{c}_i$.
In particular,  denote $d=2k-1-t=z\mu+r$ for some integers $z$ and $r$ where $0\leq r\leq \mu-1$. Then $\alpha=k-1=d-\mu=(z-1)\mu+r$.
Next node $i$  computes the following $\mu$ symbols
\begin{equation}\label{eqeq7}\begin{cases}
 \omega_{1}+\alpha_{i}^{\mu}\omega_{\mu+1}+\alpha_{i}^{2\mu}\omega_{2\mu+1}+\cdots+\alpha_{i}^{(z-1)\mu}\omega_{(z-1)\mu+1}+\alpha_{i}^{z\mu}\omega_{z\mu+1}, \\
 ~~~~~\vdots \\
 \omega_{r}+\alpha_{i}^{\mu}\omega_{\mu+r}+\alpha_{i}^{2\mu}\omega_{2\mu+r}+\cdots+\alpha_{i}^{(z-1)\mu}\omega_{(z-1)\mu+r}+\alpha_{i}^{z\mu}\omega_{z\mu+r},\\
 \omega_{r+1}+\alpha_{i}^{\mu}\omega_{\mu+r+1}+\alpha_{i}^{2\mu}\omega_{2\mu+r+1}+\cdots+\alpha_{i}^{(z-1)\mu}\omega_{(z-1)\mu+r+1},\\
 ~~~~~\vdots \\
 \omega_{\mu}+\alpha_{i}^{\mu}\omega_{2\mu}+\alpha_{i}^{2\mu}\omega_{3\mu}+\cdots+\alpha_{i}^{(z-1)\mu}\omega_{z\mu}.
\end{cases} \end{equation}
Recall that from (\ref{eqc}) it has  $c_{i,j}=\bm{\varphi}_{i}\bm{u}_{j}^\tau+\alpha_{i}^{\mu}\bm{\varphi}_{i}\bm{v}_{j}^\tau$ for $j\in[k-1]$.
Then, it can be verified that for $1\leq l\leq r$,
\begin{eqnarray*}
&&\omega_{l}+\alpha_{i}^{\mu}\omega_{\mu+l}+\cdots+\alpha_{i}^{(z-1)\mu}\omega_{(z-1)\mu+l}+\alpha_{i}^{z\mu}\omega_{z\mu+l}\\
&=&\bm{\varphi}_{i}\Big(\bm{u}_{l}^\tau+\alpha_{i}^{\mu}(\bm{u}_{\mu+l}^\tau+\bm{v}_{l}^\tau)+\cdots+\alpha_{i}^{(z-1)\mu}(\bm{u}_{(z-1)\mu+l}^\tau+\bm{v}_{(z-2)\mu+l}^\tau)+\alpha_{i}^{z\mu}\bm{v}_{(z-1)\mu+l}^\tau\Big)\\
&=&\bm{\varphi}_{i}\Big(\big(\bm{u}_{l}^\tau+\alpha_{i}^{\mu}\bm{v}_{l}^\tau\big)+\alpha_{i}^{\mu}\big(\bm{u}_{\mu+l}^\tau+\alpha_{i}^{\mu}\bm{v}_{\mu+l}^\tau\big)+\cdots+\alpha_{i}^{(z-1)\mu}\big(\bm{u}_{(z-1)\mu+l}^\tau+\alpha_{i}^{\mu}\bm{v}_{(z-1)\mu+l}^\tau\big)\Big)\\
&=&c_{i,l}+\alpha_{i}^{\mu}c_{i,\mu+l}+\cdots+\alpha_{i}^{(z-1)\mu}c_{i,(z-1)\mu+l},
\end{eqnarray*}
while for $r+1\leq l\leq \mu$,
\begin{eqnarray*}
&&\omega_{l}+\alpha_{i}^{\mu}\omega_{\mu+l}+\cdots+\alpha_{i}^{(z-2)\mu}\omega_{(z-2)\mu+l}+\alpha_{i}^{(z-1)\mu}\omega_{(z-1)\mu+l}\\
&\stackrel{\tiny a}{=}&\bm{\varphi}_{i}\Big(\bm{u}_{l}^\tau+\alpha_{i}^{\mu}(\bm{u}_{\mu+l}^\tau+\bm{v}_{l}^\tau)+\cdots+\alpha_{i}^{(z-2)\mu}(\bm{u}_{(z-2)\mu+l}^\tau+\bm{v}_{(z-3)\mu+l}^\tau)+\alpha_{i}^{(z-1)\mu}\bm{v}_{(z-2)\mu+l}^\tau\Big)\\
&=&\bm{\varphi}_{i}\Big(\big(\bm{u}_{l}^\tau+\alpha_{i}^{\mu}\bm{v}_{l}^\tau\big)+\alpha_{i}^{\mu}\big(\bm{u}_{\mu+l}^\tau+\alpha_{i}^{\mu}\bm{v}_{\mu+l}^\tau\big)+\cdots+\alpha_{i}^{(z-2)\mu}\big(\bm{u}_{(z-2)\mu+l}^\tau+\alpha_{i}^{\mu}\bm{v}_{(z-2)\mu+l}^\tau\big)\Big)\\
&=&c_{i,l}+\alpha_{i}^{\mu}c_{i,\mu+l}+\cdots+\alpha_{i}^{(z-2)\mu}c_{i,(z-2)\mu+l},
\end{eqnarray*}
where the equality $\stackrel{\tiny a}{=}$ is due to the fact that when $r+1\leq l\leq \mu$, $\omega_{(z-1)\mu+l}=\bm{\varphi}_i \bm{v}_{(z-2)\mu+l}^\tau$. Therefore, the calculations in (\ref{eqeq7}) actually give $\mu$ linear equations on $c_{i,1}, c_{i,2},...,c_{i,k-1}$. Furthermore, write the $\mu$ linear equations in the matrix form $H_{i,1}\bm{c}_{i}^\tau$, then the coefficient matrix $H_{i,1}$ has the form :
\begin{equation*}
H_{i,1}=\begin{pmatrix}I_{\mu}&\alpha_{i}^{\mu}I_{\mu}&\cdots&\alpha_{i}^{(z-2)\mu}I_{\mu}&\alpha_{i}^{(z-1)\mu}I_{\mu}^{(r)}\end{pmatrix},
\end{equation*}
where $I_{\mu}$ denotes the $\mu \times \mu$ identity matrix, and  $I_{\mu}^{(r)}$ denotes the $\mu\times r$ matrix consisting of the first $r$ columns of $I_{\mu}$.

\vspace{4pt}

Now we turn to Phase 2 and still fix $i\in[t]$. Since each node $j\in [t]\setminus\{i\}$ has known $M\bm{\varphi}_{j}^\tau$ after Phase 1, then node $j$  sends the symbol $\bm{\psi}_{i}M\bm{\bm{\varphi}}_{j}^\tau=\bm{c}_{i}\bm{\bm{\varphi}}_{j}^\tau$ to node $i$. Thus in Phase 2 node $i$  receives $t-1$ more symbols $
\{\bm{c}_{i}\bm{\varphi}_{j}^\tau|\ j\in[t],~j\neq i\}
$
which correspond to $t-1$ linear equations on $c_{i,1},c_{i,2},..., c_{i,k-1}$. Write these $t-1$ linear equations in matrix form $H_{i,2}\bm{c}_{i}^\tau$, then the coefficient matrix $H_{i,2}$ has the form
\begin{equation*}
H_{i,2}=\left(
\begin{array}{ccccc}
1&\alpha_{1}&\alpha_{1}^{2}&\cdots\cdots& \alpha_{1}^{k-2}\\
\vdots&\vdots&\vdots&\vdots&\vdots \\
1&\alpha_{i-1}&\alpha_{i-1}^{2}&\cdots\cdots& \alpha_{i-1}^{k-2}\\
1&\alpha_{i+1}&\alpha_{i+1}^{2}&\cdots\cdots& \alpha_{i+1}^{k-2}\\
\vdots&\vdots&\vdots&\vdots& \vdots\\
1&\alpha_{t}&\alpha_{t}^{2}&\cdots\cdots&\alpha_{t}^{k-2}\\
\end{array}
\right)\;.
\end{equation*}

Therefore, after the two phases, node $i$ obtains $\mu+t-1=k-t+t-1=k-1$ linear equations on $c_{i,1},c_{i,2},..., c_{i,k-1}$. Moreover, this linear system has coefficient matrix
\begin{equation}\label{eq8}
H=\begin{pmatrix} H_{i,1}\\H_{i,2}\end{pmatrix}=
\begin{pmatrix}
I_{\mu} & \alpha_{i}^{\mu}I_{\mu}&\alpha_{i}^{2\mu}I_{\mu} &...&\alpha_{i}^{(z-2)\mu}I_{\mu}&\alpha_{i}^{(z-1)\mu}I_{\mu}^{(r)} \\
P & \widetilde{\Delta} P &\widetilde{\Delta}^{2} P&\cdots&\widetilde{\Delta}^{z-2} P&\widetilde{\Delta}^{z-1} P^{(r)}
\end{pmatrix},
\end{equation}
where $P$ denotes the $(t-1)\times\mu$ matrix consisting of the first $\mu$ columns of $H_{i,2}$, $P^{(r)}$ is the $(t-1)\times r$ sub-matrix of $P$ restricted to the first $r$ columns, and $\widetilde{\Delta}={\rm diag}[\alpha_{1}^{\mu},...,\alpha_{i-1}^{\mu},\alpha_{i+1}^{\mu},...,\alpha_{t}^{\mu}]$.

Finally, we will show that the coefficient matrix $H$ in (\ref{eq8}) is invertible, thus node $i$ can solve $c_{i,1},c_{i,2},..., c_{i,k-1}$.
Note that the $k-1$ columns of $H$ are divided into $z$ column blocks where the first $z-1$ blocks each has $\mu$ columns while the last block has $r$ columns. Then executing elementary column transformations on $H$, i.e., multiplying the first $r$ columns of the $(z-1)$-th block by $\alpha_{i}^{\mu}$ , and subtracting from the $z$-th block, we can get
\begin{equation*}
H'=
\begin{pmatrix}
I_{\mu} & \alpha_{i}^{\mu}I_{\mu}&\alpha_{i}^{2\mu}I_{\mu} &\cdots&\alpha_{i}^{(z-2)\mu}I_{\mu}&\emph{0} \\
P & \widetilde{\Delta} P &\widetilde{\Delta}^{2} P&\cdots&\widetilde{\Delta}^{z-2} P&(\widetilde{\Delta}-\alpha_{i}^{\mu}I_{t-1})\widetilde{\Delta}^{z-2} P^{(r)}
\end{pmatrix}.
\end{equation*}

For $j=z-2,z-3,...,1$, multiply the $j$-th block by $\alpha_{i}^{\mu}$ , and subtract from the $(j+1)$-th block, then we finally get
\begin{equation*}
\begin{aligned}
H''
=&\begin{pmatrix}
I_{\mu} & \emph{0}&\emph{0} &\cdots&\emph{0}&\emph{0} \\
P & (\widetilde{\Delta}-\alpha_{i}^{\mu}I_{t-1}) P &(\widetilde{\Delta}-\alpha_{i}^{\mu}I_{t-1})\widetilde{\Delta} P&\cdots&(\widetilde{\Delta}-\alpha_{i}^{\mu}I_{t-1})\widetilde{\Delta}^{z-3} P&(\widetilde{\Delta}-\alpha_{i}^{\mu}I_{t-1})\widetilde{\Delta}^{z-2} P^{(r)}
\end{pmatrix}     \\
=&\begin{pmatrix}
I_{\mu} & {\emph{0}_{\mu\times(t-1)}} \\
P & (\widetilde{\Delta}-\alpha_{i}^{\mu}I_{t-1})\widetilde{P}
\end{pmatrix},
\end{aligned}
\end{equation*}
where
$
\widetilde{P}=(P \ \ \ \widetilde{\Delta} P \ \ \ \cdots \ \ \ \widetilde{\Delta}^{z-3} P \ \ \ \widetilde{\Delta}^{z-2}P^{(r)})
$
is a $(t-1)\times(t-1)$ Vandermonde matrix and thus is invertible.
Note that $\widetilde{\Delta}-\alpha_{i}^{\mu}I_{t-1}$ is a $(t-1)\times(t-1)$ diagonal matrix ${\rm diag}[\alpha_{1}^{\mu}-\alpha_{i}^{\mu},...,\alpha_{i-1}^{\mu}-\alpha_{i}^{\mu},\alpha_{i+1}^{\mu}-\alpha_{i}^{\mu},...,\alpha_{t}^{\mu}-\alpha_{i}^{\mu}]$. Since $\alpha^\mu_i\neq \alpha^\mu_j$ for $1\leq i\neq j\leq n$,
thus $\widetilde{\Delta}-\alpha_{i}^{\mu}I_{t-1}$ is also invertible. As a result, the matrix $H''$ is invertible. Since only elementary column transformations are executed from $H$ to $H''$, it immediately follows that $H$ is an invertible matrix.

For fixed $i\in[t]$, we have shown that node $i$ can recover ${\bm c}_i$ as required by the cooperative repair property. The repair of other failed nodes goes in the same way and the theorem is proved.
\end{proof}

\section{MSCR codes with $d\geq \max \{2k-1-t,k\}$ and $2\leq t\leq n-k$}\label{sec5}
In this section, we first show that any MSCR code can be transformed into a systematic MSCR code with the same parameters. In particular, the codes constructed in Section \ref{sec4} could have a systematic form. Then by applying a shortening technique to these codes, we build scalar MSCR codes for all $2\leq t\leq n-k$ and $d\geq \max \{2k-1-t,k\}$.

\subsection{Systematic MSCR codes}\label{sec5a}
Since $B=k\alpha$ for MSCR codes, the original data file can be denoted as $(\bm{m}_1,...,\bm{m}_k)$ where each $\bm{m}_i, i\in[k]$, consists of $\alpha$ symbols. An MSCR code is called systematic if there exist $k$ nodes (called systematic nodes) which store $\bm{m}_i, i\in[k]$, respectively. In fact, through a reverse application of the data reconstruction, we can turn any MSCR code into a systematic one.

\begin{theorem}\label{thmthm5}
Suppose there exists an MSCR code $\mathcal{C}$ with parameters $(n,k,d,t,\alpha,\beta)$, then for any $I\subseteq[n]$ with $|I|=k$, there exists an $(n,k,d,t,\alpha,\beta)$ systematic MSCR code $\mathcal{C}'$ taking the nodes in $I$ as systematic nodes.
\end{theorem}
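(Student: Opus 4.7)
The plan is a reparametrization of the message space using the data-reconstruction map at $I$. Fix $I=\{i_1,\ldots,i_k\}\subseteq[n]$ and let $\mathrm{Enc}:\mathbb{F}_q^B\to(\mathbb{F}_q^\alpha)^n$ denote the encoder of $\mathcal{C}$. Because $\mathcal{C}$ is linear and admits data reconstruction from any $k$ nodes, the linear map
\begin{equation*}
\Lambda_I:\mathbb{F}_q^B\to\mathbb{F}_q^{k\alpha},\qquad \bm{m}\mapsto\bigl(\mathrm{Enc}(\bm{m})_{i_1},\ldots,\mathrm{Enc}(\bm{m})_{i_k}\bigr)
\end{equation*}
is injective, and since $B=k\alpha$ for MSCR codes it is in fact an invertible $\mathbb{F}_q$-linear bijection. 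Write $A_I\in\mathbb{F}_q^{B\times B}$ for its matrix representation.

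Define a new encoder $\mathrm{Enc}':\bm{m}'\mapsto\mathrm{Enc}(A_I^{-1}\bm{m}')$ and let $\mathcal{C}'$ be the code produced by $\mathrm{Enc}'$. By construction, if we parse $\bm{m}'\in\mathbb{F}_q^{k\alpha}$ as $(\bm{m}'_1,\ldots,\bm{m}'_k)$ with $\bm{m}'_j\in\mathbb{F}_q^\alpha$, then the node $i_j$ under $\mathcal{C}'$ stores exactly $\bm{m}'_j$, i.e.\ $\mathcal{C}'$ is systematic on $I$.

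What remains is to verify that $\mathcal{C}'$ retains the MSCR parameters $(n,k,d,t,\alpha,\beta)$. The crucial observation is that $\mathcal{C}$ and $\mathcal{C}'$ share the \emph{same codebook}: reparametrizing the domain of $\mathrm{Enc}$ by an invertible linear map leaves the image of $\mathrm{Enc}$ unchanged. Hence every codeword of $\mathcal{C}'$ is also a codeword of $\mathcal{C}$, and (i) data reconstruction from any $k$ nodes of $\mathcal{C}'$ is obtained by first running the $\mathcal{C}$-decoder to recover $\bm{m}$ and then outputting $\bm{m}'=A_I\bm{m}$; (ii) the two-phase cooperative repair protocol of $\mathcal{C}$, with helper download $\beta_1=\beta$ and newcomer exchange $\beta_2=\beta$, operates entirely on stored codeword symbols and therefore applies verbatim to any $t$ failed nodes of $\mathcal{C}'$. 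The only subtlety, such as it is, lies in separating the codebook from the encoder: ``systematic'' is a property of the encoder and can be enforced by a purely linear change of coordinates on the message space, while data reconstruction and cooperative repair depend only on the codebook, which is preserved.
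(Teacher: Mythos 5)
Your proof takes essentially the same route as the paper's: precompose the original encoder with the inverse of its restriction to $I$, observe that this reparametrization leaves the codebook unchanged, and conclude that data reconstruction and cooperative repair carry over while the nodes in $I$ become systematic. The one small difference is that you invoke linearity of $\mathcal{C}$ to obtain the matrix $A_I$, whereas the paper's argument is stated for arbitrary MSCR codes and only needs the set-theoretic inverse of the bijection $\mathcal{E}|_I$ (bijective because it is injective and $B=k\alpha$), so your version is marginally less general than the theorem as stated, though it covers every instance the paper actually applies it to.
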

\begin{proof}
In general, we define the MSCR code $\mathcal{C}$  by using its encoding map $\mathcal{E}:~F^{B}\rightarrow (F^\alpha)^n$, $\mathcal{E}(\bm{m})=(\bm{c}_1,...,\bm{c}_n)$, that is, for a data file $\bm{m}$ of size $B$, node $i$ stores $\bm{c}_i$ for $i\in[n]$.

From the data reconstruction property, the content stored in any $k$ nodes uniquely determines the data file. In particular, for $I\subseteq[n]$ with $|I|=k$, there exists a reconstruction function $\mathcal{R}_{I}:~(F^\alpha)^k\rightarrow F^{B}$ such that
\begin{equation}\label{eqeq9}\mathcal{R}_{I}\big(\mathcal{E}|_I(\bm{m})\big)=\bm{m},~~\forall \bm{m}\in F^B\;, \end{equation}
where $\mathcal{E}|_I(\bm{m})$ denotes $\mathcal{E}(\bm{m})$ restricted to the nodes in $I$. Moreover, (\ref{eqeq9}) implies that both $\mathcal{R}_{I}$ and $\mathcal{E}|_I$ are one-to-one maps and $\mathcal{E}|_I=\mathcal{R}_{I}^{-1}$.

For any data file $\bm{m}\in F^B=F^{k\alpha}$, denote  $\bm{m}=(\bm{m}_1,...,\bm{m}_k)$ where $\bm{m}_i\in F^\alpha$ for $i\in[k]$. Then we define a new MSCR code by the encoding function $\mathcal{E}':~F^{B}\rightarrow (F^\alpha)^n$, such that
$$\mathcal{E}'(\bm{m})=\mathcal{E}(\mathcal{R}_{I}(\bm{m}_1,...,\bm{m}_k))\;.$$
Then we will say that $\mathcal{E}'$ actually defines the systematic MSCR code $\mathcal{C}'$ as required by the theorem.

First, because $\mathcal{R}_{I}$ and $\mathcal{E}|_{I'}$ are one-to-one maps, it follows that $\mathcal{E}'|_{I'}$  is a one-to-one map for any $I'\subseteq[n]$ with $|I'|=k$. Define the inverse map of $\mathcal{E}'|_{I'}$ as the reconstruction function for $I'$, then $\mathcal{C}'$ satisfies the data reconstruction property. Moreover,
\begin{eqnarray*}\mathcal{E}'|_I(\bm{m})&=&\mathcal{E}|_I(\mathcal{R}_{I}(\bm{m}_1,...,\bm{m}_k))\\
&=&\mathcal{R}_{I}^{-1}(\mathcal{R}_{I}(\bm{m}_1,...,\bm{m}_k))\\
&=&(\bm{m}_1,...,\bm{m}_k)\;,\end{eqnarray*}
so the $k$ nodes in $I$ are systematic nodes. For the repair property, since $\mathcal{C}'$ and $\mathcal{C}$, the original MSCR code,  have the same codeword space except that they have different encoding maps, the repair property of $\mathcal{C}$ are maintained in $\mathcal{C}'$. It is easy to verify that $\mathcal{C}'$ and $\mathcal{C}$ have the same parameters. The theorem is proved.
\end{proof}

In particular, for the $(n,k,d,t)$ scalar MSCR code constructed under the product matrix framework, if we want the nodes $\{1,...,k\}$ to be systematic nodes, then for any data file $\bm{m}=(\bm{m}_1,...,\bm{m}_k)\in (F^\alpha)^k$ we first solve the message matrix $M(\bm{m})$ from $\Psi M(\bm{m})=W(\bm{m})$ through the data reconstruction process, where $\Psi$ denotes the encoding matrix $G$ restricted to the first $k$ rows, and $W(\bm{m})$ is the $k\times\alpha$ matrix whose $k$ rows are exactly $\bm{m}_1,...,\bm{m}_k$. Thus we obtain a systematic $(n,k,d,t)$ scalar MSCR code which encodes the data file $\bm{m}$  as $C=GM(\bm{m})$.

\subsection{Scalar MSCR codes with $d\geq \max \{2k-1-t,k\}$ and $2\leq t\leq n-k$}\label{sec5b}
In Section \ref{sec4}, we have constructed scalar MSCR codes for any $2\leq t\leq k-1$ and $d=2k-1-t$. Next we show
by proper shortening from these codes, one can derive scalar MSCR codes for all $d\geq \max \{2k-1-t,k\}$ and $2\leq t\leq n-k$. First, Theorem \ref{thm8} states the relations between the parameters of the original MSCR code and the shortened code.
The shortening technique is specifically described in the proof of Theorem \ref{thm8}. Then in Corollary \ref{coro7}, applying the shortening technique to a previously constructed MSCR code, it gives the code we want in this Section.

\begin{theorem}\label{thm8}
 If there exists an $(n^{\prime}=n+\delta,k^{\prime}=k+\delta,d^{\prime}=d+\delta,t)$ scalar MSCR code $\mathcal{C}'$ for some $\delta\geq0$, then there must exist an $(n,k,d,t)$ scalar MSCR code $\mathcal{C}$.
\end{theorem}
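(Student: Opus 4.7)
The plan is to realize $\mathcal{C}$ as a \emph{shortening} of $\mathcal{C}'$. The crucial observation is that the sub-packetization is invariant under the parameter shift $(n,k,d)\mapsto(n+\delta,k+\delta,d+\delta)$: since $\alpha'=d'-k'+t=(d+\delta)-(k+\delta)+t=d-k+t=\alpha$, both codes share the same per-node block length. Hence one may hope to obtain $\mathcal{C}$ from $\mathcal{C}'$ by freezing $\delta$ systematic coordinates to zero and simply deleting the $\delta$ corresponding nodes.

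I would first invoke Theorem \ref{thmthm5} to replace $\mathcal{C}'$ by an equivalent systematic $(n',k',d',t)$ MSCR code whose systematic nodes are $\{1,\ldots,k+\delta\}$. For each file $\bm{m}=(\bm{m}_1,\ldots,\bm{m}_k)\in(F^\alpha)^k$, form the padded message $\bm{m}^{\ast}=(\bm{m}_1,\ldots,\bm{m}_k,\bm{0},\ldots,\bm{0})\in(F^\alpha)^{k+\delta}$ and encode $\bm{m}^{\ast}$ via $\mathcal{C}'$. In the resulting codeword, nodes $k+1,\ldots,k+\delta$ are identically zero by the systematic property, so no information is lost by removing them. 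Define $\mathcal{C}$ as the code on the remaining $n=n'-\delta$ coordinates, with message space $(F^\alpha)^k$ and encoder $\bm{m}\mapsto \mathcal{E}'(\bm{m}^{\ast})$ restricted to those coordinates.

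For data reconstruction, given any $k$ nodes of $\mathcal{C}$, adjoin the $\delta$ known zero nodes to obtain $k+\delta=k'$ nodes of $\mathcal{C}'$; the reconstruction property of $\mathcal{C}'$ then recovers $\bm{m}^{\ast}$ and hence $\bm{m}$. For the cooperative repair, given $t$ failed nodes of $\mathcal{C}$, I would run the repair procedure of $\mathcal{C}'$ on the same $t$ failed nodes, but instruct each newcomer to always place the $\delta$ zero nodes into its helper set. This is permissible because those zero nodes are always surviving in $\mathcal{C}'$, and the constraint $n-t\geq d$ (needed for the target code) guarantees $n'-t\geq d'$ surviving nodes exist overall. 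Since each zero helper contributes $\bm{0}\cdot\bm{\varphi}^{\tau}=0$ in Phase 1, the newcomer need not actually download these symbols; the remaining $d'-\delta=d$ helpers lie in $\mathcal{C}$ and contribute one symbol each. Phase 2 is unchanged because it involves only the $t$ newcomers. Thus the per-newcomer bandwidth is $d+(t-1)$, matching the scalar MSCR point at parameters $(n,k,d,t)$.

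The main obstacle is really only a bookkeeping check that every MSCR parameter realigns correctly under the shortening, most notably that $\alpha'=\alpha$ (so no re-packetization is needed) and that the zero nodes may legitimately be forced into every Phase-1 helper set. Both points follow immediately once the systematic form is secured via Theorem \ref{thmthm5}, so the argument reduces to a direct verification rather than a new construction.
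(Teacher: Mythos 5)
Your proof is correct and follows essentially the same route as the paper: both invoke Theorem~\ref{thmthm5} to pass to a systematic form, fix $\delta$ systematic coordinates to zero, puncture those coordinates, and then argue data reconstruction and cooperative repair by adjoining the $\delta$ ``imaginary'' zero nodes as extra helpers/reconstruction nodes. The only cosmetic difference is that you zero out the last $\delta$ systematic nodes while the paper zeroes out the first $\delta$; the substance is identical.
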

\begin{proof}
By Theorem \ref{thmthm5}, we can assume that $\mathcal{C}'$ is an $(n',k',d',t)$ systematic scalar MSCR code with systematic nodes $1,...,k'$. From (\ref{mscr}) we know that
\begin{eqnarray*}
 \alpha^{\prime}&=&d^{\prime}-k^{\prime}+t \\
 &=&d-k+t
\end{eqnarray*}
  and
\begin{eqnarray*}
B'&=&k^{\prime}\alpha^{\prime} \\
 &=&(k+\delta)(d-k+t)\;,
\end{eqnarray*}
while a scalar MSCR code with parameters $(n,k,d,t)$ has
$$
\alpha=d-k+t,~~B=k\alpha\;.
 $$
Thus it has
$$\alpha'=\alpha,~~ B'=B+\delta\alpha\;.$$
Now consider all the codewords in $\mathcal{C}'$ that have zeros in the first $\delta$ nodes and then puncture these codewords in the first $\delta$ nodes, it gives the desired $(n,k,d,t)$ MSCR code $\mathcal{C}$.

More specifically, in the data reconstruction any $k$ nodes in $\mathcal{C}$ plus $\delta$ imaginary systematic nodes that store all zeros correspond to $k'$ nodes in $\mathcal{C}'$ which uniquely determines a data file of length $B'$ with the first $\delta\alpha$ symbols being zeros, therefore any $k$ nodes in $\mathcal{C}$ uniquely determines a data file of size $B'-k\delta=B$. The cooperative repair of any $t$ nodes in $\mathcal{C}$ with each connecting to $d$ helper nodes can be done as the cooperative repair of the $t$ nodes in $\mathcal{C}'$ with each connecting to the $d$ helper nodes and $\delta$ imaginary nodes that store all zeros.  Therefore, one can see that $\mathcal{C}$ is an $(n,k,d,t)$ scalar MSCR code.
\end{proof}

\begin{corollary}\label{coro7}
 For any $2\leq t\leq n-k$ and $d\geq \max \{2k-1-t,k\}$, there exists an $(n,k,d,t)$ scalar MSCR code.
\end{corollary}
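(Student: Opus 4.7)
The plan is to reduce the target parameters to the regime already handled by Section~\ref{sec4} via the shortening technique of Theorem~\ref{thm8}. Given $(n,k,d,t)$ with $2\leq t\leq n-k$ and $d\geq\max\{2k-1-t,k\}$, I would set $\delta=d-2k+1+t$ and aim to apply Theorem~\ref{thm8} with this value of $\delta$. The hypothesis $d\geq 2k-1-t$ is exactly what makes $\delta\geq 0$, so $\delta$ is a legal shortening parameter. Define inflated parameters $n'=n+\delta$, $k'=k+\delta$, $d'=d+\delta$; a direct calculation gives $d'=2k'-1-t$, which is precisely the restriction on $d$ in Section~\ref{sec4}. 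Moreover $k'-1=d-k+t$, so the second Section~\ref{sec4} hypothesis $t\leq k'-1$ reduces to $d\geq k$, which is the other half of the stated assumption. Finally $n'-k'=n-k\geq t$ preserves $t\leq n'-k'$, so $(n',k',d',t)$ is an admissible parameter set.

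Next I would invoke the construction of Section~\ref{sec4} to obtain an $(n',k',d',t)$ scalar MSCR code $\mathcal{C}'$. The only ingredient needed externally is a choice of $n'$ distinct nonzero field elements $\alpha_1,\dots,\alpha_{n'}\in\mathbb{F}_q$ whose $\mu$-th powers are pairwise distinct, where $\mu=k'-t=d-k+1$; this is easily arranged by taking $q$ large enough with, for example, $q-1\geq n'$ and $\gcd(q-1,\mu)=1$, as already noted in Section~\ref{sec4}. By Theorem~\ref{thmthm5}, $\mathcal{C}'$ may be converted into a systematic scalar MSCR code on any prescribed set of $k'$ systematic nodes; I would put the first $\delta$ nodes inside that systematic set so that the shortening of Theorem~\ref{thm8} applies cleanly.

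The final step is simply to apply Theorem~\ref{thm8}: select the codewords of $\mathcal{C}'$ whose restriction to the first $\delta$ (systematic) nodes is the zero vector, and puncture those $\delta$ coordinates. Theorem~\ref{thm8} has already verified that the resulting code $\mathcal{C}$ inherits both data reconstruction and cooperative repair, and that its parameters are exactly $(n,k,d,t)$ with $\alpha=d-k+t$ and $\beta=1$. Thus $\mathcal{C}$ is the scalar MSCR code required by the corollary.

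The only place that demands real attention is the parameter bookkeeping in the first paragraph: the two Section~\ref{sec4} prerequisites $d'=2k'-1-t$ and $t\leq k'-1$ translate, respectively, to $d\geq 2k-1-t$ and $d\geq k$, so the assumption $d\geq\max\{2k-1-t,k\}$ is exactly what is needed and nothing more. Beyond this check I do not expect any genuine obstacle, since Theorems~\ref{thmthm5} and~\ref{thm8} and the Section~\ref{sec4} construction already do all the heavy lifting.
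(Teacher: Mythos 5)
Your proof is correct and follows the paper's argument exactly: set $\delta=d-(2k-1-t)$, invoke the Section~\ref{sec4} construction at the shifted parameters $(n',k',d',t)$, and shorten via Theorem~\ref{thm8}. The only (minor) difference is that you explicitly unwind the Section~\ref{sec4} prerequisite $t\leq k'-1$ to the assumption $d\geq k$ and check $n'-k'=n-k\geq t$, details the paper leaves implicit.
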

\begin{proof}
Define $\delta=d-(2k-1-t)\geq 0$, and let $n^{\prime}=n+\delta, k^{\prime}=k+\delta, d^{\prime}=d+\delta$. It is easy to verify that $d^{\prime}=2k^{\prime}-1-t$. Since $t\leq d-k+t=k^{\prime}-1$, then we can obtain an $(n^{\prime},k^{\prime},d^{\prime},t)$ scalar MSCR code from the construction in Section \ref{sec4}. Thus the desired $(n,k,d,t)$ scalar MSCR code can be  constructed as in Theorem \ref{thm8}.
\end{proof}

As a result, when $2k-1-t\leq k$, i.e., $k\leq t+1$, our construction presents scalar MSCR codes for all $d\geq k$ which covers all possible parameters for $(n,k,d,t)$ cooperative regenerating codes. When $2k-1-t> k$, i.e., $k>t+1$, our construction restricts to the case $d\geq 2k-1-t$. As a complementary result, in the next section we will prove the nonexistence of a family of linear scalar MSCR codes for $k<d<2k-2-t$. Recall that, the existence of scalar MSCR codes for $d=k$ has been ensured in \cite{Shum2011}.

\section{Nonexistence of linear scalar MSCR codes for $k<d<2k-2-t$}\label{sec5-c}
The nonexistence result relies on an assumption that the linear MSCR codes have {\it invariant repair spaces}.  In the following, we first describe the linear model for MSCR codes and explain the property of invariant repair space. Then we derive the interference alignment property for such MSCR codes and prove the nonexistence result under the condition $k<d<2k-2-t$.

\subsection{Linear MSCR codes with invariant repair space}\label{sec6a}
Suppose there exists an $(n,k,d,t,\alpha,\beta)$ linear MSCR code $\mathcal{C}$ over $\mathbb{F}_q$. By Theorem \ref{thmthm5}, we can always assume that $\mathcal{C}$ is systematic and has the following generator matrix
\begin{equation}\label{eq101}
G=\begin{pmatrix}
I_{\alpha}&&&\\
&I_{\alpha}&&\\
&&\ddots&\\
&&&I_{\alpha}\\
A_{1,1}&A_{1,2}&\cdots&A_{1,k}\\
\vdots&&&\\
A_{r,1}&A_{r,2}&\cdots&A_{r,k}
\end{pmatrix},
\end{equation}
where $I_{\alpha}$ denotes the $\alpha\times\alpha$ identity matrix, $A_{i,j}$ is a $\alpha\times\alpha$ matrix over $\mathbb{F}_q$ for $i\in[r]$, $j\in[k]$, and $r=n-k$.
For simplicity, every $\alpha$ consecutive rows of $G$ are regarded as a thick row, thus $G$ has $n$ thick rows which exactly correspond to the $n$ storage nodes.
For any data vector $\bm{m}\in \mathbb{F}_{q}^{k\alpha}$, node $i$ stores an $\alpha$-dimensional vector $\bm{c}_{i}^{\tau}=G_{i}\bm{m}^{\tau}$, where $G_i$ denotes the $i$-th thick row of $G$ and $i\in[n]$. Since $\bm{m}$ is independently and uniformly chosen from $\mathbb{F}_q^{k\alpha}$, we can also view node $i$ as storing the linear space spanned by the rows of $G_{i}$, denoted by $\langle G_i\rangle$, which is a subspace of $\mathbb{F}_{q}^{k\alpha}$. Then the {\it data reconstruction} requirement can be restated as follows.

{\bf Data reconstruction.~}The subspaces stored in any $k$ nodes can generate the entire space, namely, for any $i_{1},\ldots,i_{k}\in[n]$, $\sum_{j=1}^{k}\langle G_{i_{j}}\rangle=\mathbb{F}_{q}^{k\alpha}$.

Obviously, the data reconstruction requirement implies that each $A_{i,j}$ is invertible for all $i\in[r]$ and $j\in[k]$.

Then we describe the cooperative repair process. Suppose $\mathcal{F}\subset[n]$ is the set of failed nodes and $|\mathcal{F}|=t$. For each $i\in\mathcal{F}$, let $\mathcal{H}_i\subseteq [n]\setminus\mathcal{F}$ denote the set of helper nodes for repairing node $i$ and $|\mathcal{H}_i|=d$. In the first phase of the repair process, each node $j\in\mathcal{H}_i$ transmits $S_{j\rightarrow i, \mathcal{F},\mathcal{H}}\bm{c}_{j}^\tau=S_{j\rightarrow i, \mathcal{F},\mathcal{H}}G_{j}\bm{m}^{\tau}$ to repair node $i$, where
$S_{j\rightarrow i, \mathcal{F},\mathcal{H}}$ is a $\beta_1\times\alpha$ matrix corresponding to the linear transformation performed on node $j$ and $\mathcal{H}=(\mathcal{H}_i)_{i\in\mathcal{F}}$. From the view of linear spaces, we call $\langle S_{j\rightarrow i, \mathcal{F},\mathcal{H}}G_j\rangle$ as the {\it repair space} of node $j$ for repairing node $i$ with respect to the failed node set $\mathcal{F}$ and the helper node set $\mathcal{H}$. In the second phase of the repair process, the nodes in $\mathcal{F}$ exchange data with each other. Specifically, for any node $i\in\mathcal{F}$, each node $i'\in\mathcal{F}\setminus\{i\}$ transmits to node $i$ a $\beta_2$-dimensional repair space $\langle \gamma_{i'\rightarrow i, \mathcal{F},\mathcal{H}}\rangle$, where $\gamma_{i'\rightarrow i, \mathcal{F},\mathcal{H}}$ is a $\beta_2\times k\alpha$ matrix generated from $\sum_{j\in\mathcal{H}_{i'}}\langle S_{j\rightarrow i', \mathcal{F},\mathcal{H}}G_{j}\rangle$. Then the {\it node repair} requirement can be restated as follows.

{\bf Cooperative repair.~}For any $i\in\mathcal{F}$, the space stored by node $i$ can be recovered from the repair spaces collected by node $i$ in the two phases of the cooperative repair process, i.e., $\langle G_{i}\rangle\subseteq \sum_{j\in\mathcal{H}_{i}}\langle S_{j\rightarrow i, \mathcal{F},\mathcal{H}}G_{j}\rangle+\sum_{i'\in\mathcal{F}\setminus\{i\}}\langle\gamma_{i'\rightarrow i, \mathcal{F},\mathcal{H}}\rangle$.

\begin{definition}
A linear MSCR code with the generator matrix defined in (\ref{eq101}) is said to have invariant repair spaces if for any $i,j\in[n]$, the repair space $\langle S_{j\rightarrow i, \mathcal{F},\mathcal{H}}G_j\rangle$ is independent of $\mathcal{F}$ and $\mathcal{H}$, or equivalently, the repair matrix $S_{j\rightarrow i, \mathcal{F},\mathcal{H}}$ is independent of $\mathcal{F}$ and $\mathcal{H}$.
\end{definition}

As a result, for a linear MSCR code with invariant repair spaces, we denote the repair matrix by $S_{j\rightarrow i}$ instead of $S_{j\rightarrow i, \mathcal{F},\mathcal{H}}$, which means so long as node $j$ is connected to repair node $i$ in a cooperative repair of $t$ failed nodes containing $i$, node $j$ always performs the same linear transformation $S_{j\rightarrow i}$ on its stored data in spite of the identity of other failed nodes and other helper nodes. This property brings great convenience to the repair process in practice. Actually, most of the existing scalar MSCR codes have invariant repair spaces, such as the codes constructed in \cite{Chen2013,Shum2016,Li2014}. Moreover, the construction in this paper also satisfies this property.

\subsection{Interference alignment and nonexistence result}\label{sec6b}
Next we consider only linear scalar MSCR codes that have invariant repair spaces. Since for scalar MSCR codes, it holds $\beta_1=\beta_2=1$ and $\alpha=d-k+t$, the repair spaces $\langle S_{j\rightarrow i}G_j\rangle$ and $\langle\gamma_{i'\rightarrow i, \mathcal{F},\mathcal{H}}\rangle$ are both $1$-dimensional subspaces, thus are denoted as $\langle {\bm s}_{j\rightarrow i}G_j\rangle$ and $\langle{\bm\gamma}_{i'\rightarrow i, \mathcal{F},\mathcal{H}}\rangle$ respectively. Moreover, we always assume the scalar MSCR code is systematic and has a generator matrix as defined in (\ref{eq101}). Note that the property of having invariant repair spaces is maintained after transforming a linear MSCR code to a systematic one as illustrated in Theorem \ref{thmthm5}.

\begin{lemma}\label{lem8}(Interference alignment) Given a linear scalar MSCR code as described above, then we have,

$(a)$ For any $i,j\in[k]$, $i\neq j$,
  \begin{equation*}
  \bm{s}_{k+1\rightarrow i}A_{1,j}\sim\bm{s}_{k+2\rightarrow i}A_{2,j}\sim\cdots\sim \bm{s}_{k+\alpha\rightarrow i}A_{\alpha,j},
  \end{equation*}
  where for two vectors ${\bm x}$ and ${\bm x}'$, the notation $\bm{x}\sim \bm{x}'$ means $\bm{x}=c\bm{x}'$ for some nonzero $c\in \mathbb{F}_{q}^{*}$.

$(b)$ Suppose $k\geq t$, then for any $i\in[k]$ and any $j_{1},\ldots,j_{t-1}\in[k]\setminus \{i\}$,
  \begin{equation*}
  {\rm rank}\begin{pmatrix}
  \bm{s}_{k+1\rightarrow i}A_{1,i} \\ \vdots \\ \bm{s}_{k+\alpha\rightarrow i}A_{\alpha,i} \\ \bm{\gamma}^{(i)}_{j_{1}\rightarrow i,\mathcal{F},\mathcal{H}} \\ \vdots \\ \bm{\gamma}^{(i)}_{j_{t-1}\rightarrow i,\mathcal{F},\mathcal{H}}
  \end{pmatrix}=\alpha,
  \end{equation*}
  where $\bm{\gamma}^{(i)}_{j_{l}\rightarrow i,\mathcal{F},\mathcal{H}}\in\mathbb{F}_q^{\alpha}$ denotes the $i$-th component of $\bm{\gamma}_{j_{l}\rightarrow i,\mathcal{F},\mathcal{H}}$, that is, $\bm{\gamma}_{j_{l}\rightarrow i,\mathcal{F},\mathcal{H}}=(\bm{\gamma}^{(1)}_{j_{l}\rightarrow i,\mathcal{F},\mathcal{H}},...,\bm{\gamma}^{(k)}_{j_{l}\rightarrow i,\mathcal{F},\mathcal{H}})$, $1\leq l\leq t-1$, $\mathcal{F}=\{i,j_1,...,j_{t-1}\}\subseteq[k]$ and $\mathcal{H}_{i'}=[k+\alpha]\setminus\mathcal{F}$ for all $i'\in\mathcal{F}$.

\end{lemma}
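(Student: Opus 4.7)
The plan is to read off both conclusions from the decoding constraints of one carefully chosen cooperative-repair instance. Fix $i\in[k]$ and take $\mathcal{F}=\{i,j_{1},\ldots,j_{t-1}\}\subseteq[k]$ with $\mathcal{H}_{i'}=[k+\alpha]\setminus\mathcal{F}$ for every $i'\in\mathcal{F}$; note $|\mathcal{H}_{i'}|=k+\alpha-t=d$, so this is a legal scenario. By the invariance assumption, the coefficient vector on $\bm{m}\in\mathbb{F}_{q}^{k\alpha}$ of each Phase-1 scalar received by node $i$ is either $\bm{s}_{h\to i}$ placed in block $h$ (for $h\in[k]\setminus\mathcal{F}$) or the length-$k\alpha$ concatenation $(\bm{s}_{k+\ell\to i}A_{\ell,1},\ldots,\bm{s}_{k+\ell\to i}A_{\ell,k})$ (for $\ell\in[\alpha]$); the Phase-2 scalars contribute the vectors $\bm{\gamma}_{j_{l}\to i,\mathcal{F},\mathcal{H}}$. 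Cooperative repair demands that each of the $\alpha$ unit functionals $\bm{e}_{p}^{(i)}$ lie in the row span of these $\alpha+k-1$ vectors, so there exists a decoding matrix $(M^{\mathrm{sys}}\mid M^{\mathrm{par}}\mid M^{\mathrm{coop}})$ whose block-by-block action gives $M^{\mathrm{par}}X+M^{\mathrm{coop}}Y=I_{\alpha}$ on block $i$ and $M^{\mathrm{par}}X_{j'}+M^{\mathrm{coop}}Y_{j'}+\mathbf{1}[j'\notin\mathcal{F}]\,M^{\mathrm{sys}}_{:,j'}\bm{s}_{j'\to i}=0$ on every other block $j'\neq i$, where $X=[\bm{s}_{k+\ell\to i}A_{\ell,i}]_{\ell}$, $Y=[\bm{\gamma}^{(i)}_{j_{l}\to i,\mathcal{F},\mathcal{H}}]_{l}$, and $X_{j'},Y_{j'}$ are the analogous block-$j'$ projections.

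Part (b) is then immediate: the block-$i$ identity writes $I_{\alpha}$ as a product of an $\alpha\times(\alpha+t-1)$ matrix with the $(\alpha+t-1)\times\alpha$ matrix $\binom{X}{Y}$, so $\binom{X}{Y}$ has column rank $\alpha$, which is exactly the stated conclusion.

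Part (a) needs more work. Fix $i\neq j\in[k]$ and, assuming $k\geq t+1$, choose $\mathcal{F}$ containing $i$ but not $j$ with $\mathcal{F}\subseteq[k]$; the block-$j$ equation then becomes $K\binom{X_{j}}{Y_{j}}=-M^{\mathrm{sys}}_{:,j}\bm{s}_{j\to i}$, a rank-$1$ matrix, where $K:=(M^{\mathrm{par}}\mid M^{\mathrm{coop}})$, while the block-$j_{l}$ equations for $l\in[t-1]$ say $\binom{X_{j_{l}}}{Y_{j_{l}}}$ has columns in $\ker K$. To upgrade ``$\mathrm{rank}\,X_{j}\leq t$'' into ``$\mathrm{rank}\,X_{j}\leq 1$'', I would expand each $\bm{\gamma}_{j_{l}\to i,\mathcal{F},\mathcal{H}}$ as a linear combination of $j_{l}$'s Phase-1 symbols (so the rows of $Y_{j}$ become expressible through $\bm{s}_{k+\ell\to j_{l}}A_{\ell,j}$ and $\bm{s}_{j\to j_{l}}$) and then compare with the analogous cooperative-repair instance in which the role of $i$ is played by $j_{l}$; these comparisons, together with the invariance of the matrices $\bm{s}_{\bullet\to\bullet}$ and $A_{\ell,j}$ across different choices of $\mathcal{F}$, should propagate the MSR-style Shah--Rashmi--Kumar \cite{Shah2009} alignment through the cooperative phase and yield rank $1$. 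The edge cases $k\leq t$ are handled analogously by admitting parity nodes into $\mathcal{F}$ while keeping $j$ and all remaining parity nodes as helpers.

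The main obstacle is precisely this last step: the single block-$j$ equation only forces $\mathrm{rank}\,X_{j}\leq t$ because $\ker K$ has dimension $t-1$ and can absorb additional rank, so any argument that treats the cooperative phase as a generic $(t-1)$-dimensional augmentation cannot distinguish $\mathrm{rank}\,X_{j}\leq 1$ from $\mathrm{rank}\,X_{j}\leq t$. Extracting the sharp rank-$1$ conclusion genuinely requires combining the invariance of the repair matrices across different choices of $\mathcal{F}$ with the structural origin of the cooperative vectors $\bm{\gamma}_{j_{l}\to i}$ as Phase-1 combinations at the other failed nodes, so the proof is not a mechanical extension of the MSR case.
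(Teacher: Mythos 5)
Your proof of part~(b) is correct and coincides with the paper's: with $\mathcal{F}\subseteq[k]$ and $\mathcal{H}_{i'}=[k+\alpha]\setminus\mathcal{F}$, the block-$i$ projection of the vectors received by node~$i$ must span $\mathbb{F}_q^\alpha$, which is exactly the stated rank condition.

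Part~(a), however, has a genuine gap that your own discussion already locates. Taking $\mathcal{F}\subseteq[k]$, the block-$j$ constraint only yields $\mathrm{rank}\,X_j\leq t$ because $\ker\,(M^{\mathrm{par}}\mid M^{\mathrm{coop}})$ has dimension $t-1$, and your suggested remedy --- expanding the cooperative vectors $\bm{\gamma}_{j_l\to i,\mathcal{F},\mathcal{H}}$ through the Phase-1 data at the other failed nodes and comparing across repair instances --- is left as a sketch; nothing in it forces the $t-1$ kernel directions to collapse onto $\langle\bm{s}_{j\to i}\rangle$. The paper avoids this problem entirely by choosing $\mathcal{F}$ to contain parity nodes, the device you mention only as an afterthought for $k\leq t$ but which is in fact the central move for every~$k$. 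Concretely, take $\mathcal{F}=\{i\}\cup\{k+\alpha-t+2,\ldots,k+\alpha\}$ (one failed systematic node and $t-1$ failed parity nodes) with helpers $([k]\setminus\{i\})\cup\{k+1,\ldots,k+\alpha-t+1\}$. The block-$i$ reconstruction now pairs the $\alpha-t+1$ Phase-1 parity rows with the $t-1$ cooperative rows to produce $I_\alpha$, so those $\alpha$ decoding columns form an invertible $\alpha\times\alpha$ matrix; appending the single decoding column for helper~$j$ gives an $\alpha\times(\alpha+1)$ matrix of rank~$\alpha$ with a one-dimensional right kernel. The block-$j$ equation then forces the stacked column with rows $\bm{s}_{j\to i}$, $\bm{s}_{k+1\to i}A_{1,j},\ldots,\bm{s}_{k+\alpha-t+1\to i}A_{\alpha-t+1,j}$, and the cooperative components $\bm{\gamma}^{(j)}_{\,\cdot\to i,\mathcal{F},\mathcal{H}}$ to have rank at most one, and since $\bm{s}_{j\to i}\neq 0$ this gives $\bm{s}_{k+\ell\to i}A_{\ell,j}\sim\bm{s}_{j\to i}$ for all $\ell\leq\alpha-t+1$. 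Because the repair spaces are invariant, $\bm{s}_{j\to i}$ does not change as $\mathcal{F}$ varies, so permuting which $t-1$ of the $\alpha$ parity nodes are failed propagates the alignment, anchored on $\bm{s}_{j\to i}$, to every $\ell\in[\alpha]$. The essential point you are missing is that by failing parity nodes, the cooperative vectors are absorbed into the full-rank block-$i$ decoding constraint rather than enlarging the kernel against which the block-$j$ alignment is measured.
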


\begin{proof}
Note that $k+\alpha=k+(d-k+t)=d+t\leq n$ and $\alpha=d-k+t\geq t$. The lemma is proved by considering the node repair requirement in different repair patterns.

(a) Let $\mathcal{F}=\{1,k+\alpha-t+2,\ldots,k+\alpha\}$, and $\mathcal{H}_{i}=\{2,\ldots,k\}\cup\{k+1,\ldots,k+\alpha-t+1\}$ for all $i\in\mathcal{F}$. That is, one systematic node and $t-1$ parity nodes fail, and the remaining $k-1$ systematic nodes and other $d-k+1$ parity nodes are helper nodes. Then after the repair process, node $1$ collects the space
\[
\Omega_{1}=\begin{pmatrix}
&\bm{s}_{2\rightarrow1}&& \\
&&\ddots& \\
&&&\bm{s}_{k\rightarrow1} \\
\bm{s}_{k+1\rightarrow1}A_{1,1}&\bm{s}_{k+1\rightarrow1}A_{1,2}&\cdots&\bm{s}_{k+1\rightarrow1}A_{1,k} \\
\vdots&\vdots&\vdots&\vdots \\
\bm{s}_{k+\alpha-t+1\rightarrow1}A_{\alpha-t+1,1}&\bm{s}_{k+\alpha-t+1\rightarrow1}A_{\alpha-t+1,2}&\cdots&\bm{s}_{k+\alpha-t+1\rightarrow1}A_{\alpha-t+1,k} \\
\bm{\gamma}_{k+\alpha-t+2\rightarrow1,\mathcal{F},\mathcal{H}}^{(1)}&\bm{\gamma}_{k+\alpha-t+2\rightarrow1,\mathcal{F},\mathcal{H}}^{(2)}&\cdots&\bm{\gamma}_{k+\alpha-t+2\rightarrow1,\mathcal{F},\mathcal{H}}^{(k)} \\
\vdots&\vdots&\vdots&\vdots \\
\bm{\gamma}_{k+\alpha\rightarrow1,\mathcal{F},\mathcal{H}}^{(1)}&\bm{\gamma}_{k+\alpha\rightarrow1,\mathcal{F},\mathcal{H}}^{(2)}&\cdots&\bm{\gamma}_{k+\alpha\rightarrow1,\mathcal{F},\mathcal{H}}^{(k)}
\end{pmatrix}.
\]
The node repair requirement implies that there exists an $\alpha\times(d+t-1)$ matrix $B=(\bm{b}_{2}^\tau\ \cdots \  \bm{b}_{k+\alpha}^\tau)$ such that
\[
B\Omega_{1}=G_{1}=\begin{pmatrix}
I_\alpha&0&\cdots&0
\end{pmatrix},
\]
where each $\bm{b}_{i}^\tau$ is an $\alpha\times1$ column vector for $2\leq i\leq k+\alpha$. More specifically,
\begin{equation}\label{*}
\begin{pmatrix}
\bm{b}_{k+1}^\tau&\cdots&\bm{b}_{k+\alpha-t+1}^\tau&\bm{b}_{k+\alpha-t+2}^\tau& \cdots& \bm{b}_{k+\alpha}^\tau
\end{pmatrix}\begin{pmatrix}
\bm{s}_{k+1\rightarrow1}A_{1,1}\\
\vdots\\
\bm{s}_{k+\alpha-t+1\rightarrow1}A_{\alpha-t+1,1}\\
\bm{\gamma}_{k+\alpha-t+2\rightarrow1,\mathcal{F},\mathcal{H}}^{(1)}\\
\vdots\\
\bm{\gamma}_{k+\alpha\rightarrow1,\mathcal{F},\mathcal{H}}^{(1)}
\end{pmatrix}=I_\alpha,
\end{equation}
and for $j\in\{2,\ldots,k\}$,
\begin{equation}\label{**}
\begin{pmatrix}
\bm{b}_{j}^\tau&\bm{b}_{k+1}^\tau&\cdots&\bm{b}_{k+\alpha-t+1}^\tau&\bm{b}_{k+\alpha-t+2}^\tau& \cdots& \bm{b}_{k+\alpha}^\tau
\end{pmatrix}\begin{pmatrix}
\bm{s}_{j\rightarrow1}\\
\bm{s}_{k+1\rightarrow1}A_{1,j}\\
\vdots\\
\bm{s}_{k+\alpha-t+1\rightarrow1}A_{\alpha-t+1,j}\\
\bm{\gamma}_{k+\alpha-t+2\rightarrow1,\mathcal{F},\mathcal{H}}^{(j)}\\
\vdots\\
\bm{\gamma}_{k+\alpha\rightarrow1,\mathcal{F},\mathcal{H}}^{(j)}
\end{pmatrix}=0.
\end{equation}
It follows from (\ref{*}) that
\[\rm rank
\begin{pmatrix}
\bm{b}_{k+1}^\tau&\cdots&\bm{b}_{k+\alpha-t+1}^\tau&\bm{b}_{k+\alpha-t+2}^\tau& \cdots& \bm{b}_{k+\alpha}^\tau
\end{pmatrix}=\alpha.
\]
As a result, the matrix
\[
\begin{pmatrix}
\bm{b}_{j}^\tau&\bm{b}_{k+1}^\tau&\cdots&\bm{b}_{k+\alpha-t+1}^\tau&\bm{b}_{k+\alpha-t+2}^\tau& \cdots& \bm{b}_{k+\alpha}^\tau
\end{pmatrix}
\]
is an $\alpha\times(\alpha+1)$ matrix of rank $\alpha$. Then from the equality (\ref{**}) we have for $2\leq j\leq k$,
\[
\rm rank
\begin{pmatrix}
\bm{s}_{j\rightarrow1}\\
\bm{s}_{k+1\rightarrow1}A_{1,j}\\
\vdots\\
\bm{s}_{k+\alpha-t+1\rightarrow1}A_{\alpha-t+1,j}\\
\bm{\gamma}_{k+\alpha-t+2\rightarrow1,\mathcal{F},\mathcal{H}}^{(j)}\\
\vdots\\
\bm{\gamma}_{k+\alpha\rightarrow1,\mathcal{F},\mathcal{H}}^{(j)}
\end{pmatrix}\leq1.
\]
Therefore,
$$
\bm{s}_{j\rightarrow1}\sim \bm{s}_{k+1\rightarrow1}A_{1,j}\sim \cdots\sim \bm{s}_{k+\alpha-t+1\rightarrow1}A_{\alpha-t+1,j}.
$$
Consider a new repair pattern  by exchanging the positions of node $k+1$ and $k+\alpha-t+2$, i.e. let $k+1$ be a failed node and $k+\alpha-t+2$ be a helper node, then we can derive $\bm{s}_{j\rightarrow1}\sim \bm{s}_{k+\alpha-t+2\rightarrow1}A_{\alpha-t+2,j}$, for $j\in\{2,\ldots,k\}$. Continue this way, then we  finally obtain
\[
\bm{s}_{j\rightarrow1}\sim \bm{s}_{k+1\rightarrow1}A_{1,j}\sim \cdots\sim \bm{s}_{k+\alpha\rightarrow1}A_{\alpha,j},\ \  2\leq j\leq k.
\]
Substituting node $1$ by an arbitrary $i\in[k]$, then (a) is obtained.

(b) Without loss of generality, suppose $i=1$ and $\{j_{1},\ldots,j_{t-1}\}=\{2,\ldots,t\}$. With respect to the $\mathcal{F}$ and $\mathcal{H}$ defined in (b), node $1$ receives the repair space
\[
\Omega_{2}=\begin{pmatrix}
&&&&\bm{s}_{t+1\rightarrow1}&& \\
&&&&&\ddots& \\
&&&&&&\bm{s}_{k\rightarrow1} \\
\bm{s}_{k+1\rightarrow1}A_{1,1}&\bm{s}_{k+1\rightarrow1}A_{1,2}&\cdots&\bm{s}_{k+1\rightarrow1}A_{1,t}&\bm{s}_{k+1\rightarrow1}A_{1,t+1}&\cdots&\bm{s}_{k+1\rightarrow1}A_{1,k} \\
\vdots&\vdots&\vdots&\vdots&\vdots&\vdots&\vdots \\
\bm{s}_{k+\alpha\rightarrow1}A_{\alpha,1}&\bm{s}_{k+\alpha\rightarrow1}A_{\alpha,2}&\cdots&\bm{s}_{k+\alpha\rightarrow1}A_{\alpha,t}&\bm{s}_{k+\alpha\rightarrow1}A_{\alpha,t+1}&\cdots&\bm{s}_{k+\alpha\rightarrow1}A_{\alpha,k} \\
\bm{\gamma}_{2\rightarrow1,\mathcal{F},\mathcal{H}}^{ (1)}&\bm{\gamma}_{2\rightarrow1,\mathcal{F},\mathcal{H}}^{ (2)}&\cdots&\bm{\gamma}_{2\rightarrow1,\mathcal{F},\mathcal{H}}^{ (t)}&\bm{\gamma}_{2\rightarrow1,\mathcal{F},\mathcal{H}}^{ (t+1)}&\cdots&\bm{\gamma}_{2\rightarrow1,\mathcal{F},\mathcal{H}}^{ (k)} \\
\vdots&\vdots&\vdots&\vdots&\vdots&\vdots&\vdots \\
\bm{\gamma}_{t\rightarrow1,\mathcal{F},\mathcal{H}}^{(1)}&\bm{\gamma}_{t\rightarrow1,\mathcal{F},\mathcal{H}}^{(2)}&\cdots&\bm{\gamma}_{t\rightarrow1,\mathcal{F},\mathcal{H}}^{(t)}&\bm{\gamma}_{t\rightarrow1,\mathcal{F},\mathcal{H}}^{(t+1)}&\cdots&\bm{\gamma}_{t\rightarrow1,\mathcal{F},\mathcal{H}}^{(k)}
\end{pmatrix}.
\]
Since $\langle G_{1}\rangle\subseteq\langle\Omega_{2}\rangle$, it is easy to see that (b) holds.
\end{proof}

\begin{lemma}\label{lem9}
Suppose $d\leq 2k-1-t$, then for any $p\in\{k+1,\ldots,k+\alpha\}$, any $\alpha$ out of the $k$ vectors $\{\bm{s}_{p\rightarrow1},\ldots,\bm{s}_{p\rightarrow k}\}$ are linearly independent.
\end{lemma}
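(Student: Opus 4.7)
My plan is a proof by contradiction combining both parts of Lemma~\ref{lem8}. The hypothesis $d\leq 2k-1-t$ forces $\alpha=d-k+t\leq k-1$, so for any $\alpha$-subset $I\subseteq[k]$ there is some $i^{*}\in[k]\setminus I$; and $\alpha\geq t$ (because $d\geq k$ is implicit for an MSCR code), so one can choose a $(t-1)$-subset $J\subseteq I$. First I would reduce to the single case $p=k+1$: fixing any $j\in[k]\setminus I$, Lemma~\ref{lem8}(a) yields $\bm{s}_{k+l\to i}=c_{i,l}\,\bm{s}_{k+1\to i}A_{1,j}A_{l,j}^{-1}$ with $c_{i,l}\neq 0$ for every $i\in I$, so the dependence structure of $\{\bm{s}_{k+l\to i}\}_{i\in I}$ coincides (up to rescaling) with that of $\{\bm{s}_{k+1\to i}\}_{i\in I}$. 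It therefore suffices to rule out a nontrivial relation $\sum_{i\in I}\lambda_{i}\bm{s}_{k+1\to i}=\bm{0}$.

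Assuming such a relation, fix $i^{*}\in[k]\setminus I$ and $J=\{j_{1},\ldots,j_{t-1}\}\subseteq I$, and consider the repair of $i^{*}$ under $\mathcal{F}=\{i^{*}\}\cup J$ with $\mathcal{H}_{i'}=[k+\alpha]\setminus\mathcal{F}$ for each $i'\in\mathcal{F}$. In Phase~2, each exchange vector $\bm{\gamma}_{j_{l}\to i^{*},\mathcal{F},\mathcal{H}}$ lies in node $j_{l}$'s pooled repair space. Projecting onto block $i^{*}$: the systematic-helper contributions vanish (since $i^{*}\in\mathcal{F}$ implies $i^{*}\notin\mathcal{H}_{j_{l}}$), and by Lemma~\ref{lem8}(a) the $\alpha$ parity-helper contributions $\bm{s}_{k+l\to j_{l}}A_{l,i^{*}}$ all collapse onto the single line spanned by $\bm{v}_{j_{l},i^{*}}:=\bm{s}_{k+1\to j_{l}}A_{1,i^{*}}$. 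Hence $\bm{\gamma}^{(i^{*})}_{j_{l}\to i^{*},\mathcal{F},\mathcal{H}}\in\langle\bm{v}_{j_{l},i^{*}}\rangle$, and Lemma~\ref{lem8}(b) forces
\[
\mathrm{rank}\!\Bigl(\{\bm{s}_{k+l\to i^{*}}A_{l,i^{*}}\}_{l=1}^{\alpha}\cup\{\bm{v}_{j_{l},i^{*}}\}_{l=1}^{t-1}\Bigr)=\alpha.
\]
Right-multiplying the assumed dependence by $A_{1,i^{*}}$ gives $\sum_{i\in I}\lambda_{i}\bm{v}_{i,i^{*}}=\bm{0}$. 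Writing $V=\mathrm{span}\{\bm{s}_{k+l\to i^{*}}A_{l,i^{*}}\}_{l=1}^{\alpha}$ and $W=\mathbb{F}_{q}^{\alpha}/V$, letting $J$ range over all $(t-1)$-subsets of $[k]\setminus\{i^{*}\}$ shows that any $t-1$ of the images $\bar{\bm{v}}_{j,i^{*}}\in W$ span $W$, so $\dim W\leq t-1$ and $\{\bar{\bm{v}}_{j,i^{*}}\}_{j\neq i^{*}}$ generates an MDS code of length $k-1$ and dimension $\dim W$. Combined with the induced relation $\sum_{i\in I}\lambda_{i}\bar{\bm{v}}_{i,i^{*}}=\bm{0}$, one extracts the desired contradiction.

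The main obstacle is reconciling a length-$\alpha$ dependence with an MDS code of dimension only $\dim W\leq t-1<\alpha$. I would treat the generic case $\dim W\geq 1$ by choosing $J\subseteq I$ so that the MDS property forces $\bar{\bm{v}}_{j_{1},i^{*}},\ldots,\bar{\bm{v}}_{j_{t-1},i^{*}}$ to be linearly independent in $W$, and then use the induced relation to produce a shorter dependence that contradicts MDS; a minimum-support/pigeonhole argument on the coefficients $\lambda_{i}$ should carry this through. The exceptional case $\dim W=0$ (the $\alpha$ parity block-$i^{*}$ projections already span $\mathbb{F}_{q}^{\alpha}$, so Lemma~\ref{lem8}(b) is vacuous) looks to be the real hard point, and I would try to rule it out either by varying $i^{*}$ across $[k]\setminus I$ and aggregating the corresponding relations, or by invoking the MDS property of the block matrix $(A_{l,j})$ implied by data reconstruction from parity-heavy $k$-subsets, to contradict $V=\mathbb{F}_{q}^{\alpha}$ under the assumed dependence.
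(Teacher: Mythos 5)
Your reduction to $p=k+1$ and the transfer of a dependence from one parity node to all of them (right-multiplying by $A_{1,j}$ and $A_{l,j}^{-1}$ for a column index $j\notin I$, which exists because $d\leq 2k-1-t$ gives $\alpha\leq k-1$) is sound and coincides with the first step of the paper's proof. The gap is in the contradiction step. By repairing a node $i^{*}\notin I$ and passing to the quotient $W=\mathbb{F}_q^{\alpha}/V$ with $V=\operatorname{span}\{\bm{s}_{k+l\rightarrow i^{*}}A_{l,i^{*}}\}_{l=1}^{\alpha}$, you discard exactly the part of Lemma~\ref{lem8}(b) that can see the hypothesized dependence: the first $\alpha$ rows concern repair vectors aimed at $i^{*}$ and are unconstrained by the assumed relation, so $\dim W$ may be $0$ (your admitted ``hard point''), and even when $1\leq\dim W\leq t-1$ the induced relation $\sum_{i\in I}\lambda_i\bar{\bm{v}}_{i,i^{*}}=\bm{0}$ is vacuous: it involves $\alpha\geq t>\dim W$ vectors, and the property ``any $t-1$ of the $\bar{\bm{v}}_{j,i^{*}}$ span $W$'' never forbids a dependence whose support has size at least $t$. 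Since the assumed dependence may well have support of size $\geq t$ (indeed $|I|=\alpha\geq t$), no minimum-support or pigeonhole manipulation can manufacture from it a dependence of support $\leq t-1$, so neither branch of your plan closes.

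The paper's resolution is to apply Lemma~\ref{lem8}(b) at a node \emph{inside} the dependent set: after normalizing the relation to $\bm{s}_{k+1\rightarrow1}\in\langle\bm{s}_{k+1\rightarrow2},\ldots,\bm{s}_{k+1\rightarrow\alpha}\rangle$, take $\mathcal{F}=[t]$ and repair node $1$. The transferred relations $\bm{s}_{k+j\rightarrow1}\in\langle\bm{s}_{k+j\rightarrow2},\ldots,\bm{s}_{k+j\rightarrow\alpha}\rangle$ for all $1\leq j\leq\alpha$ (which you already derived) put every non-aligned row $\bm{s}_{k+j\rightarrow1}A_{j,1}$, via Lemma~\ref{lem8}(a), inside $\langle\bm{s}_{k+1\rightarrow i}A_{1,1}\,:\,2\leq i\leq\alpha\rangle$, and the $t-1$ exchange rows $\bm{\gamma}^{(1)}_{i\rightarrow1,\mathcal{F},\mathcal{H}}$ lie in the same span because $t\leq\alpha$. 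Hence the whole matrix in Lemma~\ref{lem8}(b) has rank at most $\alpha-1$, the desired contradiction. The missing idea in your proposal is precisely this choice of repaired node (the one in $I$ that is expressed through the others), which lets the dependence collapse the full-space rank condition itself rather than a quotient of it.
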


\begin{proof}
Assume on the contrary that for some $p\in\{k+1,\ldots,k+\alpha\}$, there exist $\alpha$ linearly dependent vectors in $\{\bm{s}_{p\rightarrow1},\ldots,\bm{s}_{p\rightarrow k}\}$.
Without loss of generality, assume $p=k+1$ and \begin{equation}\label{eq102}\bm{s}_{k+1\rightarrow1}\in \langle \bm{s}_{k+1\rightarrow2},\ldots,\bm{s}_{k+1\rightarrow\alpha}\rangle,\end{equation} where the notation $\langle \bm{s}_{k+1\rightarrow2},\ldots,\bm{s}_{k+1\rightarrow\alpha}\rangle$ denotes the space spanned by $\{\bm{s}_{k+1\rightarrow2},\ldots,\bm{s}_{k+1\rightarrow\alpha}\}$. In the following, we will show the linear dependence in (\ref{eq102}) can be extended to all $k+j$ for $1\leq j\leq \alpha$, which will then lead to a contradiction to Lemma \ref{lem8} (b).

Since $d\leq 2k-1-t$, then $k\geq d-k+t+1=\alpha+1$, which means that there exists the $(\alpha+1)$-th component. Multiplying the invertible matrix $A_{1,\alpha+1}$ on both sides of (\ref{eq102}), we have
\[
\bm{s}_{k+1\rightarrow1}A_{1,\alpha+1}\in \langle \bm{s}_{k+1\rightarrow2}A_{1,\alpha+1},\ldots,\bm{s}_{k+1\rightarrow\alpha}A_{1,\alpha+1}\rangle.
\]
By Lemma \ref{lem8} (a), we can further obtain that  for $j\in\{1,\ldots,\alpha\}$,
\begin{equation*}
\bm{s}_{k+j\rightarrow1}A_{j,\alpha+1}\in \langle \bm{s}_{k+j\rightarrow2}A_{j,\alpha+1},\ldots,\bm{s}_{k+j\rightarrow\alpha}A_{j,\alpha+1}\rangle.
\end{equation*}
Note that all the $A_{j,\alpha+1}$'s are invertible from the data reconstruction requirement. Multiplying the inverse $A_{j,\alpha+1}^{-1}$, we have
\begin{equation}\label{***}
\bm{s}_{k+j\rightarrow1}\in \langle \bm{s}_{k+j\rightarrow2},\ldots,\bm{s}_{k+j\rightarrow\alpha}\rangle,~1\leq j\leq \alpha.
\end{equation}

From $d\leq 2k-1-t$ it also follows $k\geq d-k+t+1>t$, thus consider the repair pattern $\mathcal{F}=[t]\subset[k]$ and $\mathcal{H}_i=[k+\alpha]\setminus\mathcal{F}$ for all $i\in\mathcal{F}$. From Lemma \ref{lem8} (b), we know $\langle\{\bm{s}_{k+j\rightarrow1}A_{j,1}| 1\leq j \leq \alpha\} \cup\{\bm{\gamma}^{(1)}_{2\rightarrow1,\mathcal{F},\mathcal{H}},\ldots,\bm{\gamma}^{(1)}_{t\rightarrow1,\mathcal{F},\mathcal{H}}\}\rangle$ has dimension $\alpha$. However, by the definition of $\bm{\gamma}_{2\rightarrow1,\mathcal{F},\mathcal{H}}^{(1)}$, it has
$\bm{\gamma}^{(1)}_{2\rightarrow1,\mathcal{F},\mathcal{H}}\in\langle \bm{s}_{k+1\rightarrow2}A_{1,1},\ldots,\bm{s}_{k+\alpha\rightarrow2}A_{\alpha,1}\rangle=\langle \bm{s}_{k+1\rightarrow2}A_{1,1}\rangle$ where the equality follows from Lemma \ref{lem8} (a). In a similar way,  we get $\bm{\gamma}^{(1)}_{i\rightarrow1,\mathcal{F},\mathcal{H}}\in\langle \bm{s}_{k+1\rightarrow i}A_{1,1},\ldots,\bm{s}_{k+\alpha\rightarrow i}A_{\alpha,1}\rangle=\langle \bm{s}_{k+1\rightarrow i}A_{1,1}\rangle$ for $2\leq i\leq t$.
Therefore,
\begin{eqnarray}
&&\langle\{\bm{s}_{k+j\rightarrow1}A_{j,1}| 1\leq j \leq \alpha\} \cup\{\bm{\gamma}^{(1)}_{2\rightarrow1,\mathcal{F},\mathcal{H}},\ldots,\bm{\gamma}^{(1)}_{t\rightarrow1,\mathcal{F},\mathcal{H}}\}\rangle \nonumber\\
&{\subseteq}&\langle\{\bm{s}_{k+j\rightarrow2}A_{j,1},\ldots,\bm{s}_{k+j\rightarrow\alpha}A_{j,1}| 1\leq j \leq \alpha\} \cup\{\bm{s}_{k+1\rightarrow2}A_{1,1},\ldots,\bm{s}_{k+1\rightarrow t}A_{1,1}\}\rangle\label{lem8-1}\\
&{\subseteq}&\langle\{\bm{s}_{k+1\rightarrow2}A_{1,1},\ldots,\bm{s}_{k+1\rightarrow\alpha}A_{1,1}\} \rangle\label{lem8-2},
\end{eqnarray}
where (\ref{lem8-1}) comes from (\ref{***}), and (\ref{lem8-2}) comes from Lemma \ref{lem8} (a) and the fact that $\alpha=d-k+t\geq t$. So, (\ref{lem8-2}) implies that $\langle\{\bm{s}_{k+j\rightarrow1}A_{j,1}| 1\leq j \leq \alpha\} \cup\{\bm{\gamma}^{(1)}_{2\rightarrow1,\mathcal{F},\mathcal{H}},\ldots,\bm{\gamma}^{(1)}_{t\rightarrow1,\mathcal{F},\mathcal{H}}\}\rangle$ has dimension at most $\alpha-1$ which contradicts to Lemma \ref{lem8} (b).
\end{proof}

\begin{theorem}
For $k<d<2k-2-t$, there exist no linear scalar $(n,k,d,t)$ MSCR codes that have invariant repair spaces.
\end{theorem}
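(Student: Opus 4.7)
The plan is to argue by contradiction: assuming a linear scalar $(n,k,d,t)$ MSCR code with invariant repair spaces exists in the regime $k<d<2k-2-t$, I will combine the interference alignment of Lemma~\ref{lem8}(a) with the MDS-like independence of Lemma~\ref{lem9} to force the parity submatrices $A_{l,j}$ into a rigid factorized form, which will collapse the parity-block span $V_i:=\langle \bm{s}_{k+l\to i}A_{l,i}\mid l\in[\alpha]\rangle$ down to a single line. As already observed in the proof of Lemma~\ref{lem9}, Lemma~\ref{lem8}(a) forces each cooperation contribution $\bm{\gamma}^{(i)}_{j_r\to i,\mathcal{F},\mathcal{H}}$ to lie in the single line $\langle \bm{x}_{j_r}A_{1,i}\rangle$, where $\bm{x}_j:=\bm{s}_{k+1\to j}$; so once $\dim V_i=1$ is established, the total $i$-th-block repair dimension is capped at $1+(t-1)=t$, contradicting Lemma~\ref{lem8}(b), which demands dimension $\alpha=d-k+t>t$ because $d>k$.

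First I will re-express Lemma~\ref{lem8}(a) in scalar form: for all distinct $i,j\in[k]$ and $l\in[\alpha]$ there exist $\lambda^{(i,j)}_l\in\mathbb{F}_q^{*}$ with $\lambda^{(i,j)}_1=1$ such that $\bm{s}_{k+l\to i}=\lambda^{(i,j)}_l\,\bm{x}_i A_{1,j}A_{l,j}^{-1}$. Equating the two expressions obtained from two different $j,j'\in[k]\setminus\{i\}$ and cancelling the invertible right factor $A_{l,j'}A_{1,j'}^{-1}$ gives the eigenvector identity $\bm{x}_i E_l^{(j,j')}=(\lambda^{(i,j')}_l/\lambda^{(i,j)}_l)\,\bm{x}_i$, where $E_l^{(j,j')}:=A_{1,j}A_{l,j}^{-1}A_{l,j'}A_{1,j'}^{-1}$. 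Hence every $\bm{x}_i$ with $i\in[k]\setminus\{j,j'\}$ is a left eigenvector of the fixed $\alpha\times\alpha$ matrix $E_l^{(j,j')}$.

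Next I will show $E_l^{(j,j')}$ must be a scalar matrix. Since $d<2k-2-t$ implies $d\le 2k-1-t$, Lemma~\ref{lem9} applies and any $\alpha$ of $\bm{x}_1,\dots,\bm{x}_k$ are linearly independent; the same hypothesis also gives $k-2\ge\alpha+1$, so $E_l^{(j,j')}$ has at least $\alpha+1$ left eigenvectors among the $\bm{x}_i$'s in which any $\alpha$ are independent. A short linear-algebra argument --- any nontrivial dependence among these $\alpha+1$ eigenvectors must involve \emph{all} coefficients nontrivially, and right-multiplication by $E_l^{(j,j')}$ must carry that dependence to a scalar multiple of itself --- forces all associated eigenvalues to coincide; the independence statement then upgrades the common eigenspace to all of $\mathbb{F}_q^\alpha$, so $E_l^{(j,j')}=\sigma I_\alpha$. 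Equivalently $A_{1,j}A_{l,j}^{-1}=\sigma_l^{(j)}M_l$ for a matrix $M_l$ independent of $j$, and substituting back yields $\bm{s}_{k+l\to i}=\mu_l^{(i)}\bm{x}_iM_l$ for some scalars $\mu_l^{(i)}$.

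Plugging these identities into each generator of $V_i$ and using $A_{l,i}=(\sigma_l^{(i)})^{-1}M_l^{-1}A_{1,i}$, every $\bm{s}_{k+l\to i}A_{l,i}$ becomes a scalar multiple of the single vector $\bm{x}_iA_{1,i}$, so $\dim V_i=1$; combined with the first-paragraph observation this contradicts Lemma~\ref{lem8}(b). The main obstacle I foresee is the scalar-matrix deduction of the third paragraph: it is the only place where the hypothesis $d<2k-2-t$ (equivalently $k\ge\alpha+3$) enters essentially, since with only $\alpha$ eigenvectors $E_l^{(j,j')}$ could be genuinely non-scalar, and without the factorization $A_{1,j}A_{l,j}^{-1}=\sigma_l^{(j)}M_l$ the crucial collapse of $V_i$ to a line would fail.
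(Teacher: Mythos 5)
Your proof is correct, and it takes a genuinely different route from the paper's. Both arguments rest on the same three ingredients --- Lemma~\ref{lem8}(a), Lemma~\ref{lem8}(b), and Lemma~\ref{lem9} --- and both terminate by showing that the parity-block span $\langle \bm{s}_{k+l\to i}A_{l,i}\mid l\in[\alpha]\rangle$ collapses to a single line, contradicting the rank-$\alpha$ requirement of Lemma~\ref{lem8}(b) once one notes that each cooperative-phase row contributes at most one dimension and $\alpha-t+1=d-k+1\geq 2$. The divergence is in how the collapse is produced. The paper works concretely with the three extra systematic indices $\alpha+1,\alpha+2,\alpha+3$: it expands $\bm{s}_{k+j\to\alpha+1}$ and $\bm{s}_{k+j\to\alpha+2}$ in the basis $\{\bm{s}_{k+j\to 1},\dots,\bm{s}_{k+j\to\alpha}\}$, introduces the matrices $B_j,C_j$ and the diagonal comparison matrices $\Lambda_j,\Gamma_j$, and manipulates the three proportionality chains \eqref{inter3}--\eqref{inter5} to force $\Lambda_j\sim\Gamma_j$ and then the contradiction. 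You instead promote Lemma~\ref{lem8}(a) to a rigid structural statement: the transfer matrices $E_l^{(j,j')}=A_{1,j}A_{l,j}^{-1}A_{l,j'}A_{1,j'}^{-1}$ must be scalar, because they admit at least $\alpha+1$ common left eigenvectors among the $\bm{x}_i$, any $\alpha$ of which are independent by Lemma~\ref{lem9}; the resulting factorization $A_{1,j}A_{l,j}^{-1}=\sigma_l^{(j)}M_l$ with a $j$-independent $M_l$ then kills the parity span in one line. Your ``eigenvalue coincidence'' step is the standard fact that $m\geq \alpha+1$ vectors in $\mathbb{F}_q^\alpha$, any $\alpha$ of which are independent, that are all left eigenvectors of $E$ must share a single eigenvalue (the unique dependence relation has all coefficients nonzero and must be preserved up to scalar under right multiplication by $E$), and since $\alpha$ of them already span, $E$ is scalar. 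Both proofs invoke $d<2k-2-t$ (i.e.\ $k\geq\alpha+3$) in the same essential place --- you to get $\alpha+1$ eigenvectors with indices outside $\{j,j'\}$, the paper to get indices $\alpha+2,\alpha+3$. Your route is more conceptual: it exposes the forced factorized form of the parity submatrices, whereas the paper's computation, while more elementary, obscures this structure. Either is a valid proof.
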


\begin{proof}
Assume on the contrary there exists such an MSCR code for some $n,k,d,t$ with $k<d<2k-2-t$. As stated before, by Theorem \ref{thmthm5} we can always assume this MSCR code is systematic and has a generator matrix as defined in (\ref{eq101}). Since $d<2k-2-t$, it follows $k>d-k+t+2=\alpha+2$, i.e. $k\geq\alpha+3>t$.

We first consider the repair of node $\alpha+1$ and node $\alpha+2$ by the helper nodes $k+j$, $1\leq j\leq\alpha$.
In particular, we restrict to the $(\alpha+2)$-th and the $(\alpha+3)$-th components. That is, by Lemma \ref{lem8} (a), we have
 \begin{equation}\label{inter3}
 \bm{s}_{k+1\rightarrow\alpha+1}A_{1,\alpha+2}\sim\bm{s}_{k+2\rightarrow\alpha+1}A_{2,\alpha+2}\sim\cdots\sim \bm{s}_{k+\alpha\rightarrow\alpha+1}A_{\alpha,\alpha+2},
 \end{equation}
 \begin{equation}\label{inter4}
 \bm{s}_{k+1\rightarrow\alpha+1}A_{1,\alpha+3}\sim \bm{s}_{k+2\rightarrow\alpha+1}A_{2,\alpha+3}\sim\cdots\sim \bm{s}_{k+\alpha\rightarrow\alpha+1}A_{\alpha,\alpha+3},
 \end{equation}
\begin{equation}\label{inter5}
 \bm{s}_{k+1\rightarrow\alpha+2}A_{1,\alpha+3}\sim\bm{s}_{k+2\rightarrow\alpha+2}A_{2,\alpha+3}\sim\cdots\sim \bm{s}_{k+\alpha\rightarrow\alpha+2}A_{\alpha,\alpha+3}.
 \end{equation}
Then our proof goes along the following line. First, represent both $\bm{s}_{k+j\rightarrow\alpha+1}$ and $\bm{s}_{k+j\rightarrow\alpha+2}$ as linear combinations of $\{\bm{s}_{k+j\rightarrow 1},...,\bm{s}_{k+j\rightarrow \alpha}\}$ (by Lemma \ref{lem9}). Then by Lemma \ref{lem8} (a) and (\ref{inter3}), (\ref{inter4}) we can derive a relation between the $(\alpha+2)$-th and the $(\alpha+3)$-th components. With this relation we can finally substitute the $A_{j,\alpha+3}$'s in (\ref{inter5}) with $A_{j,\alpha+2}$'s and then obtain a contradiction to Lemma \ref{lem8} (b). The details are as follows.

From Lemma \ref{lem9}, we know that for $1\leq j\leq \alpha$, the $\alpha$ vectors $\bm{s}_{k+j\rightarrow1},\ldots,\bm{s}_{k+j\rightarrow\alpha}$ each of length $\alpha$  are linearly independent, thus $\bm{s}_{k+j\rightarrow\alpha+1}$ can be represented as a linear combination of $\{\bm{s}_{k+j\rightarrow1},\ldots,\bm{s}_{k+j\rightarrow\alpha}\}$. Specifically, suppose
 $$\bm{s}_{k+j\rightarrow\alpha+1}=\bm{\lambda}_{k+j,\alpha+1}\begin{pmatrix}
  \bm{s}_{k+j\rightarrow1}\\ \vdots \\ \bm{s}_{k+j\rightarrow\alpha}
  \end{pmatrix}=\begin{pmatrix}\lambda_{k+j,\alpha+1}^{(1)} & \cdots &\lambda_{k+j,\alpha+1}^{(\alpha)}\end{pmatrix}\begin{pmatrix}
  \bm{s}_{k+j\rightarrow1}\\ \vdots \\ \bm{s}_{k+j\rightarrow\alpha}
  \end{pmatrix},$$
  where $\bm{\lambda}_{k+j,\alpha+1}=(\lambda_{k+j,\alpha+1}^{(1)} \ \cdots \ \lambda_{k+j,\alpha+1}^{(\alpha)})\in\mathbb{F}_q^\alpha$. Moreover, we claim that $\lambda_{k+j,\alpha+1}^{(i)}\neq0$ for $1\leq i\leq \alpha$. Otherwise, it leads to a contradiction to Lemma \ref{lem9}.
Similarly, for $1\leq j\leq\alpha$, we can  write $\bm{s}_{k+j\rightarrow\alpha+2}$ as
  $$\bm{s}_{k+j\rightarrow\alpha+2}=\bm{\lambda}_{k+j,\alpha+2}\begin{pmatrix}
  \bm{s}_{k+j\rightarrow1}\\ \vdots \\ \bm{s}_{k+j\rightarrow\alpha}
  \end{pmatrix}=\begin{pmatrix}\lambda_{k+j,\alpha+2}^{(1)} & \cdots &\lambda_{k+j,\alpha+2}^{(\alpha)}\end{pmatrix}\begin{pmatrix}
  \bm{s}_{k+j\rightarrow1}\\ \vdots \\ \bm{s}_{k+j\rightarrow\alpha}
  \end{pmatrix},$$
  where $\bm{\lambda}_{k+j,\alpha+2}\in(\mathbb{F}_q^*)^\alpha$.

  For simplicity, we denote for $1\leq j\leq\alpha$,
  \[
  B_{j}=\begin{pmatrix}
  \bm{s}_{k+j\rightarrow1}\\ \vdots \\ \bm{s}_{k+j\rightarrow\alpha}
  \end{pmatrix}A_{j,\alpha+2}=\begin{pmatrix}
  \bm{s}_{k+j\rightarrow1}A_{j,\alpha+2}\\ \vdots \\ \bm{s}_{k+j\rightarrow\alpha}A_{j,\alpha+2}
  \end{pmatrix},
  \]
  \[
  C_{j}=\begin{pmatrix}
  \bm{s}_{k+j\rightarrow1}\\ \vdots \\ \bm{s}_{k+j\rightarrow\alpha}
  \end{pmatrix}A_{j,\alpha+3}=\begin{pmatrix}
  \bm{s}_{k+j\rightarrow1}A_{j,\alpha+3}\\ \vdots \\ \bm{s}_{k+j\rightarrow\alpha}A_{j,\alpha+3}
  \end{pmatrix}.
  \]
  Obviously, all the $B_{j}, C_{j}$'s are invertible, and (\ref{inter3})-(\ref{inter5}) can be respectively rewritten as below.
  \begin{equation}\label{16}
  \bm{\lambda}_{k+1,\alpha+1}B_{1}\sim\bm{\lambda}_{k+2,\alpha+1}B_{2}\sim\cdots\sim \bm{\lambda}_{k+\alpha,\alpha+1}B_{\alpha},
  \end{equation}
  \begin{equation}\label{17}
  \bm{\lambda}_{k+1,\alpha+1}C_{1}\sim\bm{\lambda}_{k+2,\alpha+1}C_{2}\sim\cdots\sim \bm{\lambda}_{k+\alpha,\alpha+1}C_{\alpha},
  \end{equation}
  \begin{equation}\label{18}
  \bm{\lambda}_{k+1,\alpha+2}C_{1}\sim\bm{\lambda}_{k+2,\alpha+2}C_{2}\sim\cdots\sim \bm{\lambda}_{k+\alpha,\alpha+2}C_{\alpha}.
  \end{equation}

  From Lemma \ref{lem8} (a), we know that  there exist diagonal matrices $\Lambda_{j}$ and $\Gamma_{j}$ such that $B_{j}=\Lambda_{j}B_{1}$ and $C_{j}=\Gamma_{j}C_{1}$ for $ 2\leq j\leq\alpha$. Since $B_{1}$ and $C_{1}$ are invertible, it follows from (\ref{16})-(\ref{18}) that
  \begin{equation}\label{out1}
  \bm{\lambda}_{k+1,\alpha+1}\sim\bm{\lambda}_{k+2,\alpha+1}\Lambda_{2}\sim\cdots\sim \bm{\lambda}_{k+\alpha,\alpha+1}\Lambda_{\alpha},
  \end{equation}
  \begin{equation}\label{out2}
  \bm{\lambda}_{k+1,\alpha+1}\sim\bm{\lambda}_{k+2,\alpha+1}\Gamma_{2}\sim\cdots\sim \bm{\lambda}_{k+\alpha,\alpha+1}\Gamma_{\alpha},
  \end{equation}
  \begin{equation}\label{out3}
  \bm{\lambda}_{k+1,\alpha+2}\sim\bm{\lambda}_{k+2,\alpha+2}\Gamma_{2}\sim\cdots\sim \bm{\lambda}_{k+\alpha,\alpha+2}\Gamma_{\alpha}.
  \end{equation}

  Then (\ref{out1}) and (\ref{out2}) imply that for $2\leq j\leq\alpha$,
  $$
  \bm{\lambda}_{k+j,\alpha+1}\Lambda_{j}\sim\bm{\lambda}_{k+j,\alpha+1}\Gamma_{j},
  $$
  i.e.
  $$
  \begin{pmatrix}\lambda_{k+j,\alpha+1}^{(1)} & \cdots &\lambda_{k+j,\alpha+1}^{(\alpha)}\end{pmatrix}\Lambda_{j}\sim \begin{pmatrix}\lambda_{k+j,\alpha+1}^{(1)} & \cdots &\lambda_{k+j,\alpha+1}^{(\alpha)}\end{pmatrix}\Gamma_{j}.
  $$
  Since all the components of $\bm{\lambda}_{k+j,\alpha+1}$ are nonzero,
  it follows
  \begin{equation}\label{inter0}
 \Lambda_{j}\sim\Gamma_{j}, {~\rm i.e.,~} \Lambda_{j}=c\Gamma_{j} {~\rm for~some~} c\in\mathbb{F}_q^*.
 \end{equation}

As a result, it follows from (\ref{out3}) and (\ref{inter0}) that
$\bm{\lambda}_{k+1,\alpha+2}\sim\bm{\lambda}_{k+2,\alpha+2}\Lambda_{2}\sim\cdots\sim\bm{\lambda}_{k+\alpha,\alpha+2}\Lambda_{\alpha}$.
Multiply each term by $B_{1}$ on the right, then we get $\bm{\lambda}_{k+1,\alpha+2}B_{1}\sim\bm{\lambda}_{k+2,\alpha+2}B_{2}\sim\cdots\sim\bm{\lambda}_{k+\alpha,\alpha+2}B_{\alpha}$,
i.e.,
  \begin{equation}\label{inter6}
  \bm{s}_{k+1\rightarrow\alpha+2}A_{1,\alpha+2}\sim\cdots\sim \bm{s}_{k+\alpha\rightarrow\alpha+2}A_{\alpha,\alpha+2}.
  \end{equation}
  However, by Lemma \ref{lem8} (b), we know that
  $$
  {\rm rank}\begin{pmatrix}
  \bm{s}_{k+1\rightarrow\alpha+2}A_{1,\alpha+2} \\ \vdots \\ \bm{s}_{k+\alpha\rightarrow\alpha+2}A_{\alpha,\alpha+2}
  \end{pmatrix}\geq\alpha-t+1=d-k+1\geq2 ,
  $$
  which contradicts to (\ref{inter6}). Thus the theorem is proved.
\end{proof}

\section{Conclusions}\label{sec6}
We explicitly construct scalar MSCR codes for all $d\geq\max\{2k-1-t,k\}$. The construction can be viewed as an extension of the product matrix code construction proposed in \cite{Kumar2011} for MSR and MBR codes. Just as in \cite{Kumar2011} where the product matrix-based MSR codes only applies when $d\geq 2k-2$, our construction of MSCR codes also restricts to $d\geq 2k-1-t$. Both restrictions lead to the same limit on the information rate, i.e., $\frac{k}{n}\leq\frac12+\frac{1}{2n}$. As complementary results, the nonexistence of certain scalar MSR codes for $k<d< 2k-3$ was presented in \cite{Shah2009} and the nonexistence of certain scalar MSCR codes for $k<d< 2k-t-2$ are given in this paper. Along with this work, several results achieved so far for cooperative regenerating codes can be seen as the counterparts of the corresponding results in regenerating codes, such as the cut-set bound (\cite{Dimakis2011} and \cite{Hu2010,Shum2011}) and the general construction of high-rate MSR codes and MSCR codes (\cite{Ye2016} and \cite{Ye2018}). On the one hand, both the parameter bound and the constructions for cooperative regenerating codes degenerate into those for regenerating codes. However, on the other hand, it is nontrivial to extend the results of regenerating codes to derive their counterparts in cooperative regenerating codes. An interesting question is how to generally build a cooperative regenerating code for repairing $t>1$ erasures from regenerating codes that are designed for repairing individual node failures.
Although we cannot solve this problem right now, we can predict that there should be more extensions in cooperative regenerating codes based on the fruitful research in regenerating codes.


\appendices
\section{Proof of Lemma~\ref{lem3}}\label{appenA}
\noindent {\bf Lemma 1.}
Let $\Phi$ be a $k\times(k-1)$ Vandermonde matrix defined as in (\ref{phi}), and $\Delta$ be a $k\times k$ diagonal matrix with distinct and nonzero diagonal elements. Suppose $$
X=\Phi S+\Delta\Phi T,
$$
where $S$ and $T$ are two $(k-1)\times(k-1)$ symmetric matrices. Then $S$ and $T$ can be uniquely computed from $X,\Phi$ and $\Delta$.

\begin{proof}
Multiply $\Phi S+\Delta\Phi T$ on the right side by $\Phi^\tau$, then we get
$$
X\Phi^\tau=\Phi S\Phi^\tau+\Delta \Phi T\Phi^\tau.
$$
Define
$$\begin{aligned}
&A=\Phi S\Phi^\tau \\
&B=\Phi T\Phi^\tau.
\end{aligned}$$
Since $S$ and $T$ are symmetric, then $A$ and $B$ are also symmetric. For $i,j\in[k]$, let $a_{ij}$ and $b_{ij}$ denote the $(i,j)$-entries of $A$ and $B$ respectively. And denote by $\lambda_{i}$ the $i$-th diagonal element of $\Delta,~i\in[k]$.

Then for $i,j\in[k], i\neq j$, the $(i,j)$-th entry of $A+\Delta B$ ($=X\Phi^\tau$) is
\begin{equation}\label{eq15}
a_{ij}+\lambda_{i}b_{ij},
\end{equation}
while the $(j,i)$-th entry of $A+\Delta B$ is
\begin{equation}\label{eq16}
a_{ji}+\lambda_{j}b_{ji}.
\end{equation}
Due to the symmetry of $A$ and $B$, it has $a_{ij}=a_{ji}, b_{ij}=b_{ji}$. Thus $a_{ij}$ and $b_{ij}$ can be solved from (\ref{eq15}) and (\ref{eq16}) since $\lambda_{i}\neq \lambda_{j}$ for $i\neq j$. Exhausting all $i,j\in[k], i\neq j$, then all the non-diagonal elements of $A$ and $B$ can be obtained.

Next we consider the matrix $A$. Fix some $i\in[k]$, consider the $k-1$ symbols $\{a_{ij}|j\in[k], j\neq i\}$. Each $a_{ij}$ is the product of the $i$-th row of $\Phi S$ and the $j$-th column of $\Phi^\tau$. Since the matrix containing any $k-1$ columns of $\Phi^\tau$ is a $(k-1)\times(k-1)$ Vandermonde matrix which is invertible, the $i$-th row of $\Phi S$ is solvable from $\{a_{ij}|j\in[k], j\neq i\}$. After collecting $k-1$ rows of $\Phi S$, the matrix $S$ can be recovered since any $k-1$ rows of $\Phi$ also form an invertible matrix.

In the same way,  the matrix $T$ can be obtained from non-diagonal elements of $B$.
\end{proof}


\begin{thebibliography}{100}

\bibitem{Dimakis2011}
A.~G. Dimakis, P.~G. Godfrey, Y.~Wu, and M.~O. Wainwright, K.~ Ramchandran,
\newblock ``Network Coding for Distributed Storage Systems",
\newblock in {\it IEEE Trans. Inform. Theory}, vol. 56, no. 9, pp. 4539-4551, 2010.

\bibitem{Ramchandran2011}
C.~Suh, K.~Ramchandran,
\newblock ``Exact-Repair MDS Code Construction Using Interference Alignment",
\newblock in {\it  IEEE Transactions on Information Theory}, vol. 57, pp. 1425-1442, Feb. 2011.

\bibitem{Kumar2011}
K.~V. Rashmi, N.~B. Shah, and P.~V. Kumar,
\newblock ``Optimal Exact-Regenerating Codes for Distributed Storage at the MSR and MBR Points via a Product-Matrix Construction",
\newblock in {\it IEEE Trans. Inform. Theory}, vol. 57, no. 8, pp. 5227-5239, 2011.


\bibitem{Vardy2016}
S.~Goparaju, A.~Fazeli, A.~Vardy,
\newblock ``Minimum Storage Regenerating Codes For All Parameters",
\newblock in {\it IEEE International Symposium on Information Theory}, Oct. 2016, pp. 76-80.

\bibitem{Sasidharan2015}
B.~Sasidharan, A.~Fazeli, G.~K. Agarwal, and P.~V. Kumar,
\newblock ``A High-Rate MSR Code With Polynomial Sub-Packetization Level",
\newblock in {\it IEEE International Symposium on Information Theory}, Oct. 2015.

\bibitem{Ye2016}
M.~Ye, A.~Barg,
\newblock ``Explicit constructions of high rate MDS array codes with optimal repair bandwidth",
\newblock in {\it IEEE Trans. Inform. Theory}, vol. 63, no. 4, pp. 2001-2014, 2017.

\bibitem{Ye2016sub-}
M.~Ye, A.~Barg,
\newblock ``Explicit constructions of optimal-access MDS codes with nearly optimal sub-packetization",
\newblock in {\it IEEE Trans. Inform. Theory}, vol. 63, no. 10, pp. 6307-6317, 2017.

\bibitem{TotalRecall}
R. Bhagwan, K. Tati, Y. Cheng, S. Savage, and G. Voelker,
\newblock ``Total recall: system support for automated availability management",
\newblock in {\it Proc.
of the 1st Conf. on Networked Systems Design and Implementation}, San
Francisco, Mar. 2004.

\bibitem{Ye2018}
M.~Ye, A.~Barg,
\newblock ``Cooperative repair: Constructions of optimal
MDS codes for all admissible parameters",
\newblock in {\it IEEE Transactions on Information Theory},
\newblock 2018, 1-1.

\bibitem{Cadambe2013}
V.~R. Cadambe, S.~A. Jafar, V.~Lalitha, and H.~ Maleki,
\newblock ``Asymptotic Interference Alignment for Optimal Repair of MDS Codes in Distributed Storage",
\newblock in {\it IEEE Trans. Inform. Theory}, vol. 59, no. 5, pp. 2974-2987, 2013.


\bibitem{Tamo2016}
Z.~Wang, I.~Tamo, J.~Bruck,
\newblock ``Optimal Rebuilding of Multiple Erasures in MDS Codes",
\newblock in {\it IEEE Trans. Inform. Theory}, vol. 63, no. 2, pp. 1084-1101, 2017.

\bibitem{Hu2010}
Y. Hu, Y. Xu, X. Wang, C. Zhan, and P. Li,
\newblock ``Cooperative recovery
of distributed storage systems from multiple losses with network
coding",
\newblock in {\it IEEE J. on Selected Areas in Commun}, vol. 28, no. 2, pp. 268-275, Feb. 2010.

\bibitem{Shum2011}
Kenneth W. Shum,
\newblock ``Cooperative regenerating codes for distributed
storage systems",
\newblock in {\it IEEE Int. Conf. Comm. (ICC)}, Kyoto, Jun.
2011.

\bibitem{Shum&Hu:MBCR}
Kenneth W. Shum, Yuchong Hu, ``Exact minimum-repair-bandwidth
cooperative regenerating codes for distributed storage systems".
ISIT 2011: 1442-1446.

\bibitem{wang2013}
A. Wang and Z. Zhang,
\newblock ``Exact cooperative regenerating codes with minimum-repair-bandwidth for distributed storage",
\newblock in {\it IEEE International Confer- ence on Computer Communications},
INFOCOM2013, pp. 400-404.

\bibitem{Scouarnec2012}
N. Le Scouarnec,
\newblock ``Exact scalar minimum storage coordinated
regenerating codes",
\newblock in {\it IEEE International Symposium on Information Theory},
2012.


\bibitem{Chen2013}
J. Chen and K. W. Shum,
\newblock ``Repairing multiple failures in the Suh-Ramchandran regenerating codes",
\newblock in {\it IEEE International Symposium on Information Theory},
2013, pp. 1441-1445.


\bibitem{Shum2016}
K. W. Shum and J. Chen,
\newblock ``Cooperative repair of multiple node failures in distributed storage systems",
\newblock in {\it International Journal of Information and Coding Theory},
 vol. 3, no. 4, pp. 299-323, 2016.

\bibitem{Li2014}
J. Li and B. Li,
\newblock ``Cooperative repair with minimum-storage regenerating codes for distributed storage",
\newblock in {\it IEEE International Conference on Computer Communications},
 2014, pp. 316-324.

 \bibitem{Shah2009}
N. B. Shah, K. V. Rashmi, P. V. Kumar, and K. Ramchandran,
\newblock ``Interference Alignment in Regenerating Codes for Distributed
Storage: Necessity and Code Constructions",
\newblock in {\it IEEE Transactions on Information Theory},
 vol. 58, no. 4, pp. 2134-2158, 2009.





\end{thebibliography}
\end{document}